\numberwithin{equation}{section}
\newtheorem{thm}{Theorem}
\newtheorem{lem}{Lemma}
\newtheorem{remark}{Remark}
\newcommand{\ftwon}{{\mathbb F}_{2^n}}
\newcommand{\ftwom}{{\mathbb F}_{2^m}}
\newcommand{\ftwo}{{\mathbb F}_{2}}
\newcommand{\Tr}{{\rm {Tr}}}
\begin{document}

\title{$4$-uniform BCT permutations from generalized butterfly structure}
\author{ Nian Li, Zhao Hu
\thanks{N. Li, Z. Hu and X. Zeng are at the Hubei Key Laboratory of Applied Mathematics, Faculty of Mathematics and Statistics, Hubei University, Wuhan, 430062, China, and also with the State Key Laboratory of Cryptology, P.O. Box 5159, Beijing 100878, China.  Email: nian.li@hubu.edu.cn, zhao.hu@aliyun.com, xzeng@hubu.edu.cn}
, Maosheng Xiong
\thanks{M. Xiong is at the Department of Mathematics, The Hong Kong University of Science and Technology,
Clear Water Bay, Kowloon, Hong Kong, China. E-mail: mamsxiong@ust.hk}
and Xiangyong Zeng
}
\date{}
\maketitle

\begin{quote}
  {\small {\bf Abstract:}
As a generalization of Dillon's APN permutation, butterfly structure and generalizations have been of great interest since they generate permutations with the best known differential and nonlinear properties over the field of size $2^{4k+2}$. Complementary to these results, we show in this paper that butterfly structure, more precisely the closed butterfly also yields permutations with the best boomerang uniformity, a new and important parameter related to boomerang-style attacks. This is the sixth known infinite family of permutations in the literature with the best known boomerang uniformity over such fields. }

{\small {\bf Keywords:} Boomerang uniformity, Butterfly structure, Differential uniformity, Permutation polynomial. }
\end{quote}

\section{Introduction}

Substitution boxes (S-boxes) are important components of block ciphers. Being the only source of nonlinearity in these ciphers, they play a central role in the robustness by obscuring the relationship between the key and ciphertext. The security of most modern block ciphers relies significantly on the cryptographic properties of their S-boxes. It is crucial to employ S-boxes with good cryptographic properties in order to resist various attacks \cite{BSD,Lai,M}.

Mathematically, S-boxes are vectorial (multi-output) Boolean functions, that is, functions $F: V \to V'$ where $V$ and $V'$ are $m$ and $n$-dimensional vector spaces over the binary field $\ftwo$ respectively. For $a,b\in V=V'$, the entries of the DDT (Difference Distribution Table) of $F$ are given by
$${\rm DDT}_{F}(a,b)=\#\left\{x\in V: F(x)+F(x+a)=b \right\}.$$
The differential uniformity of $F$ is defined as
$$\delta(F)=\max_{a\in V \setminus \{\mathbf{0}\}, b\in V}{\rm DDT}_{F}(a,b).$$
Differential uniformity is an important concept in cryptography as it quantifies the degree of security of the cipher with respect to differential attacks \cite{BSD} if $F$ is used as S-boxes in the cipher. In particular, if $\delta(F) =2$, then $F$ is called almost perfect nonlinear (APN), which offers maximal resistance to differential attacks if used as S-boxes.

Proposed by Wagner \cite{WAG} in $1999$, boomerang attack is an also important cryptanalysis technique against block ciphers involving S-boxes. It can be considered as an extension of the classical differential attack \cite{BSD} as in boomerang attack, two differentials are combined and analyzed for the upper and the lower parts of the cipher \cite{BDK01,BDK02,BDD03,BK09,DKS10,KKS01,KHP+12}. Shedding new light on the effectiveness of boomerang attack, in Eurocrypt $2018$, Cid et al. \cite{CHP} introduced a new cryptanalysis tool: Boomerang Connectivity Table (BCT). Let $F: V \to V$ be a permutation. The entries of the BCT are given by
$${\rm BCT}_{F}(a,b)=\# \left\{x\in V: F^{-1}(F(x)+b)+F^{-1}(F(x+a)+b)=a \right\},$$
where $F^{-1}$ denotes the compositional inverse of $F$. The boomerang uniformity of $F$, introduced by Boura and Canteaut in \cite{BCO}, is defined as
$$\beta(F)=\max_{a,b\in V \setminus \{\mathbf{0}\} }{\rm BCT}_{F}(a,b).$$ The function $F$ is called a $\beta(F)$-uniform BCT function.

In principle, the smaller the quantity $\beta(F)$, the stronger the security of the cipher against boomerang-style attacks if $F$ is used as S-boxes in the cipher. It was known in \cite{CHP} that $\beta(F) \ge \delta(F)$, and if $\delta(F)=2$, then $\beta(F)=2$, hence APN permutations offer maximal security against both differential attack and boomerang attack. However, in even dimension which is the most interesting in cryptography, Dillon's permutation in dimension $6$ (see \cite{BDMW}) remains the only known example of APN permutations. The existence of APN permutations in even dimension $\geq 8$ remains an open problem (This is the famous Big APN Problem \cite{BDMW}). Therefore, in even dimension, permutations $F$ with $\beta(F)=4$ offer the best resistance to differential and boomerang attacks.

It has been noted in \cite{CHP} that finding permutations in even dimension with $4$-uniform BCT is a hard problem, especially when the dimension increases. Up to now, there are only $5$ infinite families of permutations $F(x):\mathbb{F}_{2^n} \to \mathbb{F}_{2^n}$ with $4$-uniform BCT where $n \equiv 2 \pmod{4}$, which we list as below:
\begin{enumerate}
  \item [1)]   $F(x)=x^{2^n-2}$ (\cite{BCO});
  \item [2)]  $F(x)=x^{2^i+1}$, $\gcd(i,n)=2$ (\cite{BCO});
  \item [3)]  $F(x)=x^{2^t+2}+\gamma x$, $t=n/2$ and ${\rm ord}(\gamma^{2^t-1})=3$ (\cite{LQSL});
  \item [4)] $F(x)=\alpha x^{2^s+1}+\alpha^{2^{t}}x^{2^{-t}+2^{t+s}}$, $n=3t$, $t\equiv 2 \, ({\rm mod}\, 4)$, $\gcd(n,s)=2$, $3|(t+s)$ and $\alpha$ is a primitive element of $\mathbb{F}_{2^n}$ (\cite{MTX});
  \item [5)] $F(x)=x^{3\cdot2^{n/2}}+a_{1}x^{2^{(n+2)/2}+1}+a_{2}x^{2^{n/2}+2}+a_{3}x^3$, where $(a_{1},a_{2},a_{3})\in \Gamma_{1}$ (see \cite{TLZ} for details).
\end{enumerate}
The first two functions were proposed by Boura and Canteaut \cite{BCO}. Boura and Canteaut also showed that the boomerang uniformity is invariant up to affine equivalence and inversion and they entirely determined the value of the boomerang uniformity for all differentially $4$-uniform
permutations of $\mathbb{F}_{2^4}$. The third one is due to  Li, Qu, Sun and Li \cite{LQSL} who also presented an equivalent formula to compute the boomerang uniformity without knowing the compositional inverse of a permutation $F$. The fourth one is the Bracken-Tan-Tan's function \cite{BTT} and was shown to possess boomerang uniformity four by Mesnager, Tang and Xiong in \cite{MTX} where they used a slightly different formula to compute the boomerang uniformity and generalized earlier results on quadratic permutations with $4$-uniform BCT. The last one was recently presented by Tu, Li, Zeng and Zhou \cite{TLZ} by detailed study on solutions to a specific degree $2$ equation over finite fields.

The purpose of this paper is to present the sixth infinite family of permutations with $4$-uniform BCT in even dimension arising from generalized butterfly structure.

As a generalization of Dillon's APN permutation in dimension 6, butterfly structure was initially proposed by Perrin, Udovenko and Biryukov \cite{PUB} to generate $2m$-bit mappings by concatenating two bivariate functions over $\ftwom$. Canteaut, Duval and Perrin \cite{CDP} further studied this structure and generalized it as below. Let $R(x,y)$ be a bivariate polynomial on $\ftwom$ such that $R_{y}:x \mapsto R(x,y)$ is a permutation of $\ftwom$ for any $y \in \ftwom$. The \emph{closed butterfly} is the function $V_R: \ftwom\times\ftwom \to \ftwom\times\ftwom$ defined by
\begin{eqnarray} \label{1:close} V_{R}(x,y)=\left(R(x,y),R(y,x)\right),\end{eqnarray} and the \emph{open butterfly} is the function $H_{R}: \ftwom\times\ftwom \to \ftwom\times\ftwom$ defined by
\begin{eqnarray*} \label{1:open} H_{R}(x,y)=\left(R\left(y,R_{y}^{-1}(x)\right),R_{y}^{-1}(x)\right). \end{eqnarray*} Here $R_{y}(x):=R(x,y)$ and $R_{y}^{-1}$ is the compositional inverse of $R_{y}$, that is, $R_{y}^{-1}(R_{y}(x))=x$ for any $x,y \in \ftwom$. It is known that $H_{R}$ is always an involution (and hence a permutation) and the two functions $H_{R}$ and $V_{R}$ are CCZ-equivalent, so they share the same differential uniformity, nonlinearity and Walsh spectrum.

Let $m,k$ be positive integers such that $m$ is odd and $\gcd(k,m)=1$. Extending previous work \cite{CDP,FFW}, Li, Tian, Yu and Wang \cite{LTYW} considered a general bivariate polynomial $R(x,y)$ of the form
\begin{eqnarray*}
R(x,y)=(x+\alpha y)^{2^k+1}+\beta y^{2^k+1} \end{eqnarray*}
for any $\alpha,\beta \in \ftwom$ and proved that the corresponding butterflies $H_R$ and $V_R$ are differentially $4$-uniform and have the best known nonlinearity when $\beta \neq (\alpha+1)^{2^k+1}$. Under this condition, however, the closed butterfly $V_R$ may not be a permutation.

Since $\gcd(2^k+1,2^m-1)=1$, any nonzero $\beta \in \ftwom$ can be written as $\beta=\beta_1^{2^k+1}$ for some $\beta_1\in\ftwom$. So equivalently, for any $\alpha, \beta \in \ftwom$, we consider the general bivariate polynomial $R(x,y)$ of the form
\begin{eqnarray} \label{1:biva}
R(x,y)=\left(x+\alpha y\right)^{2^k+1}+\left(\beta y\right)^{2^k+1}. \end{eqnarray}
Our main result is as follows.
\begin{thm} \label{thm1}
Let $m,k$ be positive integers such that $m$ is odd and $\gcd(k,m)=1$. For any $\alpha, \beta \in\ftwom \setminus \ftwo$ such that
\begin{eqnarray}\label{ab}
\alpha^2+\beta^2+\alpha \beta+1 = 0,
\end{eqnarray}
let the bivariate polynomial $R(x,y)$ be defined in (\ref{1:biva}). Then the closed butterfly $V_{R}$ given in (\ref{1:close}) is a permutation on $\ftwom \times \ftwom$ with $4$-uniform BCT.
\end{thm}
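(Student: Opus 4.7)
My strategy has two parts: showing that $V_R$ is a permutation, and bounding its BCT entries by $4$. First, I observe that the hypothesis $\alpha^2+\alpha\beta+\beta^2+1=0$ with $\alpha,\beta\in\ftwom\setminus\ftwo$ excludes the case $\beta=\alpha+1$: substituting this relation into the left-hand side gives $\alpha^2+\alpha=\alpha(\alpha+1)$, which vanishes only when $\alpha\in\ftwo$, contrary to hypothesis. Hence the result of \cite{LTYW} applies and $V_R$ is differentially $4$-uniform. To upgrade this to $V_R$ being a permutation, it suffices to check that for every $(a,b)\neq(0,0)$ the equation $V_R(x+a,y+b)=V_R(x,y)$ has no solution $(x,y)$. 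Expanding via $(u+v)^{2^k+1}=u^{2^k+1}+v^{2^k+1}+B(u,v)$, where $B(u,v):=u^{2^k}v+uv^{2^k}$, reduces this to a pair of $\ftwo$-affine equations in $(x,y)$ whose inhomogeneous parts depend only on $(a,b),\alpha,\beta$. I will show, using $\alpha^2+\alpha\beta+\beta^2=1$, that these inhomogeneous parts fail to lie in the image of the coefficient operator whenever $(a,b)\neq(0,0)$, giving the desired inconsistency.

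For the BCT, since $V_R$ is quadratic I will use the equivalent formulation (cf.\ \cite{LQSL,MTX})
\[ {\rm BCT}_{V_R}((a,b),(u,v))=\#\bigl\{(x,y,s,t)\in(\ftwom)^4:\, D_{(s,t)}V_R(x,y)=(u,v),\, B_{V_R}((a,b),(s,t))=(0,0)\bigr\}, \]
where $B_{V_R}$ is the symmetric $\ftwo$-bilinear form associated with $V_R$. An explicit computation gives
\[ B_{V_R}((x,y),(s,t))=\bigl(B(x+\alpha y,s+\alpha t)+\beta^{2^k+1}B(y,t),\; B(y+\alpha x,t+\alpha s)+\beta^{2^k+1}B(x,s)\bigr). \]
By the differential-uniformity bound, the $\ftwo$-subspace of $(s,t)$ satisfying $B_{V_R}((a,b),(s,t))=(0,0)$ has cardinality at most $4$. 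When this cardinality is $\leq 2$, summing $|\ker L_{(s,t)}|\leq 4$ over the at-most-one nonzero $(s,t)$ already yields ${\rm BCT}_{V_R}\leq 4$; when it equals $4$, three nonzero $(s,t)$'s must be handled simultaneously, each contributing either $0$ or $|\ker L_{(s,t)}|$ solutions of the affine equation $D_{(s,t)}V_R(x,y)=(u,v)$.

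The main obstacle is exactly this last case: one must show that $\alpha^2+\alpha\beta+\beta^2=1$ forces the three contributions to sum to at most $4$, either because at most one of the three affine systems $D_{(s,t)}V_R(x,y)=(u,v)$ is consistent, or because the kernels $|\ker L_{(s,t)}|$ are forced to be small for the consistent ones. A secondary technicality is that $B$ is only $\ftwo$-bilinear (not $\ftwom$-bilinear) over $\ftwom$; the coprimality hypothesis $\gcd(k,m)=1$ enters here to ensure that $x\mapsto x^{2^k}$ generates ${\rm Gal}(\ftwom/\ftwo)$, which is what makes the kernel-size estimates for linearized polynomials in $x$ and $x^{2^k}$ go through cleanly and prevents the count from being artificially inflated.
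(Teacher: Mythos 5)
Your plan is structurally sound and, in outline, parallels the paper's own strategy: you correctly note that the hypothesis $\alpha^2+\alpha\beta+\beta^2+1=0$ with $\alpha\notin\ftwo$ rules out $\beta=\alpha+1$ (so differential $4$-uniformity can be imported from \cite{LTYW}), you correctly reduce the permutation property to the inconsistency of $V_R(x+a,y+b)=V_R(x,y)$ for $(a,b)\neq(0,0)$, and your BCT formulation via the symmetric bilinear form $B_{V_R}$ is equivalent to the one from \cite{LQSL} used in the paper (your set of $(s,t)$ with $B_{V_R}((a,b),(s,t))=(0,0)$ is exactly $Z_a\cup\{(0,0)\}$ in the paper's notation). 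However, both halves of your argument stop precisely where the actual work begins. For the permutation property you assert that the inhomogeneous parts ``fail to lie in the image of the coefficient operator'' but give no mechanism for verifying this; the paper does it by converting $V_R$ to a univariate quadratic $F$ over ${\mathbb F}_{2^{2m}}$ and showing, via an explicit solvability criterion (Lemma \ref{lemma-core}), that the relevant obstruction is $\Tr_1^n(\lambda^{2^k}/\xi^{2^k})=\Tr_1^m(1)=1\neq 0$, which is where the parity of $m$ enters. Nothing in your plan identifies what plays the role of this trace obstruction in the bivariate setting.

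The gap in the BCT half is more serious because the ``hard case'' you defer is in fact the \emph{only} case: in this family every derivative equation $D_{(s,t)}V_R=(u,v)$ has $0$ or $4$ solutions (never $2$), so the kernel of $B_{V_R}((a,b),\cdot)$ always has exactly $4$ elements and you always face three nonzero directions $(s,t)$, each potentially contributing $4$ solutions for a total of $12$. Your proposal says one ``must show'' that the relation $\alpha^2+\alpha\beta+\beta^2=1$ forces the three contributions to sum to at most $4$, but offers no route to this. The paper's resolution requires three nontrivial ingredients that have no counterpart in your plan: (i) the solvability criterion of Lemma \ref{lemma-core}, which expresses consistency of each affine system as the vanishing of \emph{two} traces, $\Tr_1^m(\Delta_z)=0$ and $\Tr_1^n(\lambda_z^{2^k}\overline{\nu}_z/\xi_z^{2^k})=0$; (ii) the identities $\sum_{z\in Z_a}F(z)=0$ and the invariance of $E(z)$ on $Z_a$ (Lemma \ref{lem-invariant}), which force $\sum_{z\in Z_a}\Tr_1^m(\Delta_z)=0$, so either exactly one or all three first conditions hold; and (iii) the permutation-type identities $H(a)=F(a+\eta_a)$, $H(\eta_a)=F(a)$, $H(a+\eta_a)=F(\eta_a)$ (Lemma \ref{lem-Hz}), which show that when all three first conditions hold, the second trace equals $1$ for every $z\in Z_a$ and all three systems are inconsistent. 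Your remark that $\gcd(k,m)=1$ makes ``kernel-size estimates go through cleanly'' does not substitute for any of this; as it stands the proposal establishes only $\beta(V_R)\leq 12$.
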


\begin{remark}
Theorem \ref{thm1} complements previous work \cite{CDP,FFW,LTYW} on butterfly structure and gives rise to the sixth family of $4$-uniform BCT permutations in even dimension. It was known from \cite{LTYW} that the closed butterfly $V_R$ in Theorem \ref{thm1} possesses the best known nonlinearity when $\beta\ne\alpha+1$.
\end{remark}

\begin{remark}
Computer experiments indicate that Condition (\ref{ab}) is also necessary for $V_R$ to be a permutation. As to the open butterfly $H_R$, computer experiments indicate that the boomerang uniformity is much larger than $4$ in general. We may come back to these questions in the future.
\end{remark}

There are three ingredients in the proof of Theorem \ref{thm1}. First, we convert the vector-expression $V_R$ into a single univariate polynomial on $\mathbb{F}_{2^{2m}}$, which seems much easier to handle. Second, our computation rely crucially on a complete and explicit solvability criterion on solving a certain type of equations over finite fields (see Lemma \ref{lemma-core}). Such a criterion should be of independent interest for other applications. This is similar to \cite[Lemma 3]{TLZ} which played an essential role in the whole paper. Third, we apply the new formulation of the boomerang uniformity from \cite{LQSL} which is quite convenient for computations.

Other than the closed butterfly structure, it seems possible to treat directly univariate polynomials of general forms, similar to the ones studied in \cite{TLZ}, following the main techniques of this paper. We shall stress this problem in future work.

We organize this paper as follows: in Section \ref{2:pre} we collect some solvability criteria on certain equations over finite fields which will be used repeatedly in the paper; in Section \ref{sec3} we first show how to convert $V_R$ into a single univariate polynomial on $\mathbb{F}_{2^{2m}}$, and then discuss in details the solvability of the difference equation $F(x+a)+F(x)=b$; after this preparation, then in Section \ref{sec4} we prove the main results. To streamline the presentation, we postpone the proofs of some technical results to the end of the paper in Section {\bf Appendix}.

\section{Solving certain equations over finite fields} \label{2:pre}

The following three results will be used repeatedly in the rest of the paper.

\begin{lem}(\cite{Lidl}) \label{lem0-0}
Let $n$ be a positive integer. For any $a \in \ftwon^*:=\ftwon \setminus \{0\}$ and $b \in \ftwon$, the equation
\[x^{2}+ax+b=0\]
is solvable (with two solutions) in $\ftwon$ if and only if
\[\Tr_1^n\left(\frac{b}{a^2}\right)=0.\]
Here $\Tr_{1}^n$ is the absolute trace map from $\mathbb{F}_{2^n}$ to the binary field $\mathbb{F}_{2}$.
\end{lem}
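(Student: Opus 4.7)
The plan is a short reduction-plus-linear-algebra argument. Since $a\ne 0$, I would first normalize the equation by substituting $x=ay$, transforming $x^{2}+ax+b=0$ into $a^{2}(y^{2}+y)+b=0$, i.e., $L(y)=c$ where $L(y):=y^{2}+y$ and $c:=b/a^{2}$. Solvability over $\ftwon$ of the original equation is then equivalent to solvability of $L(y)=c$, and the factorization $y^{2}+y=y(y+1)$ shows that the preimage of any point, when nonempty, contains exactly two elements, so "two solutions" follows automatically as soon as one is found.

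The heart of the argument is identifying the image of the map $L:\ftwon\to\ftwon$. In characteristic $2$ the Frobenius is additive, so $L$ is $\ftwo$-linear, and $\ker L=\{0,1\}=\ftwo$. By rank--nullity, $\mathrm{Im}(L)$ is an $\ftwo$-hyperplane of $\ftwon$ of dimension $n-1$. Separately, $\ker \Tr_{1}^{n}$ is also an $\ftwo$-hyperplane of dimension $n-1$, since the trace is $\ftwo$-linear and surjective onto $\ftwo$. I would then verify the containment $\mathrm{Im}(L)\subseteq\ker\Tr_{1}^{n}$ by the one-line computation $\Tr_{1}^{n}(y^{2}+y)=\Tr_{1}^{n}(y^{2})+\Tr_{1}^{n}(y)=0$, using Frobenius-invariance of the trace. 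Equal codimensions then force $\mathrm{Im}(L)=\ker\Tr_{1}^{n}$.

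Combining the two steps yields the lemma: $x^{2}+ax+b=0$ is solvable in $\ftwon$ iff $c=b/a^{2}\in\mathrm{Im}(L)=\ker\Tr_{1}^{n}$, i.e., iff $\Tr_{1}^{n}(b/a^{2})=0$; and in the solvable case the two roots of the $y$-equation are $y_{0}$ and $y_{0}+1$, producing the two distinct roots $ay_{0}$ and $ay_{0}+a$ of the original equation. There is no real obstacle here, as this is a classical Artin--Schreier style statement; the only substantive step is the identification $\mathrm{Im}(L)=\ker\Tr_{1}^{n}$, which I would handle via the containment-plus-equal-codimension trick rather than by attempting an explicit construction of preimages under $L$.
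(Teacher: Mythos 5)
Your proof is correct and complete. The paper does not actually prove this lemma---it is quoted as a classical fact from Lidl--Niederreiter---so there is no internal argument to compare against; your reduction $x=ay$ followed by the identification of the image of the Artin--Schreier map $y\mapsto y^{2}+y$ with $\ker\Tr_{1}^{n}$ (containment plus equal codimension via rank--nullity) is exactly the standard textbook proof, and the two-solutions claim is handled correctly by observing that the fibers of $y\mapsto y^{2}+y$ are cosets of $\{0,1\}$.
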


\begin{lem}(\cite{Kim}) \label{lem0}
Let $n,k$ be positive integers such that $\gcd(n,k)=1$. For any $a \in \ftwon$, the equation
\[x^{2^k}+x=a\]
has either 0 or 2 solutions in $\ftwon$. Moreover, it is solvable with two solutions in $\ftwon$ if and only if $\Tr_1^n(a)=0$.
\end{lem}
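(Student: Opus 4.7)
The plan is to analyze the $\ftwo$-linear map $L:\ftwon\to\ftwon$ defined by $L(x)=x^{2^k}+x$. First I would compute its kernel: $L(x)=0$ forces $x^{2^k}=x$, so $x$ lies in the fixed field of the $\ftwo$-automorphism $x\mapsto x^{2^k}$ acting on $\ftwon$. Since $\gcd(n,k)=1$, this fixed field is $\ftwo$, so $\ker L=\ftwo$ and $|\ker L|=2$. By rank-nullity over $\ftwo$, the image of $L$ is an $\ftwo$-subspace of $\ftwon$ of dimension $n-1$, and each element of $\operatorname{im} L$ has exactly $|\ker L|=2$ preimages. Hence, for every $a\in\ftwon$, the equation $x^{2^k}+x=a$ has either $0$ or $2$ solutions, which is the first assertion.

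Next I would identify the image of $L$ explicitly with the kernel of the absolute trace. For any $x\in\ftwon$, Frobenius-invariance of the trace gives
\[\Tr_1^n(L(x)) = \Tr_1^n(x^{2^k}) + \Tr_1^n(x) = \Tr_1^n(x)+\Tr_1^n(x) = 0,\]
so $\operatorname{im} L \subseteq \ker \Tr_1^n$. Since $\Tr_1^n:\ftwon\to\ftwo$ is $\ftwo$-linear and surjective, $\ker \Tr_1^n$ is also an $\ftwo$-subspace of dimension $n-1$. Two $\ftwo$-subspaces of the same finite dimension, one contained in the other, must coincide, so $\operatorname{im} L = \ker \Tr_1^n$. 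This yields the ``moreover'' clause: $x^{2^k}+x=a$ is solvable in $\ftwon$ if and only if $\Tr_1^n(a)=0$, in which case there are exactly two solutions.

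The argument is entirely routine once cast in linear-algebraic language, so there is no real obstacle. The only step requiring a brief justification is the kernel computation; one may either quote the standard fact that the fixed field of $x\mapsto x^{2^k}$ on $\ftwon$ is $\mathbb{F}_{2^{\gcd(n,k)}}$, or equivalently factor $L(x)=x(x^{2^k-1}+1)$ and observe $\gcd(2^k-1,\,2^n-1)=2^{\gcd(k,n)}-1=1$, so that the only nonzero root of $x^{2^k-1}=1$ in $\ftwon$ is $x=1$.
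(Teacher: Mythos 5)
Your proof is correct and complete. The paper itself gives no proof of this lemma---it is quoted from the reference [Kim]---so there is nothing to compare against, but your linear-algebra argument (kernel of $x\mapsto x^{2^k}+x$ equals $\ftwo$ because $\gcd(n,k)=1$, hence the image is an index-$2$ subspace contained in, and therefore equal to, $\ker \Tr_1^n$) is the standard route and every step, including the kernel computation via $\gcd(2^k-1,2^n-1)=1$, is justified.
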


\begin{lem} \label{lemma-core}
Let $m, k$ be odd integers such that $\gcd(k,m)=1$. Let $n=2m$. For any $\mu,\nu\in\ftwon$, define
\begin{equation*}
L_{\mu,\nu}(x)=x^{2^k}+\mu\overline{x}+(\mu+1)x+\nu.
\end{equation*}
Then the equation $L_{\mu,\nu}(x)=0$ has either $0$, $2$ or $4$ solutions in $\ftwon$. More precisely, let $\xi, \Delta\in\ftwom$ and $\lambda\in\ftwon$ be defined by the equations
\begin{eqnarray}\label{notation}
\xi^{2^k-1}=1+\mu+\overline{\mu},\;\;\; \Delta=\frac{\nu+\overline{\nu}}{\xi^{2^k}},\;\;\; \lambda^{2^k}+\lambda=\mu \xi.
\end{eqnarray}
Then
\begin{enumerate}
    \item [(1)] $L_{\mu,\nu}(x)=0$ has two solutions in $\ftwon$ if and only if one of the following conditions is satisfied:\\
         (i)  $1+\mu+\overline{\mu}=0$ and $\sum_{i=0}^{m-1}(\mu^{2^k}(\nu+\overline{\nu})+\nu^{2^k})^{2^{ki}}=\nu+\overline{\nu}$;\\
         (ii) $1+\mu+\overline{\mu}\neq 0$, $\Tr_{1}^{m}(\Delta)=0$ and $\overline{\lambda}+\lambda=\xi+1$.
    \item [(2)] $L_{\mu,\nu}(x)=0$ has four solutions in $\ftwon$ if and only if
     $1+\mu+\overline{\mu}\neq 0$, $\Tr_{1}^{m}(\Delta)=0$, $\overline{\lambda}+\lambda=\xi$ and
     $\Tr_{1}^{n}\left(\frac{\lambda^{2^k}\overline{\nu}}{\xi^{2^k}}\right)=0$.
\end{enumerate}
If $\nu=0, 1+\mu+\overline{\mu} \ne 0$ and $\lambda+\overline{\lambda}=\xi$, then the set of four solutions of $L_{\mu,\nu}(x)=0$ in $\ftwon$ is given by $\left\{0,1,\lambda, \lambda+1 \right\}$.
\end{lem}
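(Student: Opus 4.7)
The plan is to reduce $L_{\mu,\nu}(x)=0$ to two coupled equations, one living over $\mathbb{F}_{2^m}$ and the other over $\mathbb{F}_{2^n}$, and then track their compatibility. First I would add $L_{\mu,\nu}(x)=0$ to its image under the conjugation $x\mapsto\overline{x}$: the cross terms containing $\mu$ and $\overline{\mu}$ cancel and what is left is the symmetrized equation
\begin{equation*}
(x+\overline{x})^{2^k}+(1+\mu+\overline{\mu})(x+\overline{x})+(\nu+\overline{\nu})=0
\end{equation*}
in $u:=x+\overline{x}\in\mathbb{F}_{2^m}$. Conversely, substituting $\overline{x}=x+u$ back into $L_{\mu,\nu}(x)=0$ simplifies it to $x^{2^k}+x=\mu u+\nu$. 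So $L_{\mu,\nu}(x)=0$ is equivalent to the pair
\begin{equation*}
x+\overline{x}=u,\qquad x^{2^k}+x=\mu u+\nu
\end{equation*}
for some $u\in\mathbb{F}_{2^m}$ satisfying the symmetrized equation, and each valid $u$ contributes either $0$ or $2$ solutions $x$, which explains the stated $0/2/4$ dichotomy.

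To enumerate possible $u$'s I would split on $1+\mu+\overline{\mu}$. When this quantity vanishes (Case~A), the symmetrized equation collapses to $u^{2^k}=\nu+\overline{\nu}$, which has a unique solution in $\mathbb{F}_{2^m}$ since $\gcd(2^k,2^m-1)=1$. When it is nonzero (Case~B), since $\gcd(2^k-1,2^m-1)=1$ it equals $\xi^{2^k-1}$ for a unique $\xi\in\mathbb{F}_{2^m}^*$, and the substitution $u=\xi v$ turns the equation into $v^{2^k}+v=\Delta$; Lemma~\ref{lem0} then gives solvability iff $\Tr_1^m(\Delta)=0$, yielding two candidates $u_0$ and $u_0+\xi$. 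For the second equation my main tool is the operator $S:=\sum_{i=0}^{m-1}(\cdot)^{2^{ki}}$, for which $\sigma^m=(\cdot)^{2^m}$ because $k$ is odd; applying $S$ to $x^{2^k}+x=\mu u+\nu$ telescopes to $x+\overline{x}=S(\mu u+\nu)$, and the symmetrized equation then forces $S(\mu u+\nu)+u\in\mathbb{F}_2$, so a given candidate $u$ is valid iff $S(\mu u+\nu)=u$. The difference of these conditions for the two candidates in Case~B is exactly $S(\mu\xi)-\xi$, and the same telescoping applied to $\lambda^{2^k}+\lambda=\mu\xi$ gives $S(\mu\xi)=\lambda+\overline{\lambda}\in\{\xi,\xi+1\}$. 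This produces the dichotomy: in the subcase $\lambda+\overline{\lambda}=\xi+1$ the two validity conditions are complementary and exactly one candidate is always valid ($2$ solutions), whereas in the subcase $\lambda+\overline{\lambda}=\xi$ they coincide and both are simultaneously valid or invalid ($4$ or $0$ solutions).

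The main obstacle is to rewrite the validity conditions $S(\mu u+\nu)=u$ in the clean closed form of the statement. In Case~A, raising both sides to the $2^k$-th power, using $S(y^{2^k})=S(y)^{2^k}$ and $u^{2^k}=\nu+\overline{\nu}$, yields precisely condition~(i). In Case~B with $\lambda+\overline{\lambda}=\xi$, I would construct an explicit solution of $x^{2^k}+x=\mu u_0+\nu$ of the form $x=\lambda v_0+w$, reducing the equation to $w^{2^k}+w=\nu+\lambda^{2^k}\Delta$; the right-hand side lies in $\mathbb{F}_{2^m}$ (using $\overline{\lambda}=\lambda+\xi$ and $\nu+\overline{\nu}=\xi^{2^k}\Delta$), and the fibre condition $x+\overline{x}=u_0$ becomes $w\in\mathbb{F}_{2^m}$, equivalent by Lemma~\ref{lem0} to $\Tr_1^m(\nu+\lambda^{2^k}\Delta)=0$. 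A short $\Tr_m^n$-calculation, again via $\overline{\lambda}=\lambda+\xi$, gives $\Tr_m^n(\lambda^{2^k}\overline{\nu}/\xi^{2^k})=\nu+\lambda^{2^k}\Delta$, so $\Tr_1^m(\nu+\lambda^{2^k}\Delta)=\Tr_1^n(\lambda^{2^k}\overline{\nu}/\xi^{2^k})$, which is the advertised condition. Finally, when $\nu=0$ and $\lambda+\overline{\lambda}=\xi$ the trace condition is automatic and direct substitution gives $L_{\mu,0}(\lambda)=\mu\xi+\mu(\lambda+\overline{\lambda})=0$, so $\{0,1,\lambda,\lambda+1\}$ is the full solution set.
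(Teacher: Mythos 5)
Your proposal is correct and follows the same overall strategy as the paper's Appendix A: both arguments reduce $L_{\mu,\nu}(x)=0$ to the symmetrized equation $u^{2^k}+(1+\mu+\overline{\mu})u+\nu+\overline{\nu}=0$ over $\ftwom$ coupled with the fibre equation $x^{2^k}+x=\mu u+\nu$, impose the compatibility condition $\sum_{i=0}^{m-1}(\mu u+\nu)^{2^{ki}}=u$, and split the analysis on $1+\mu+\overline{\mu}$ and on $\lambda+\overline{\lambda}\in\{\xi,\xi+1\}$ exactly as the paper does. The one place you genuinely diverge is the derivation of the criterion $\Tr_1^n\bigl(\lambda^{2^k}\overline{\nu}/\xi^{2^k}\bigr)=0$ in part (2): the paper evaluates the compatibility sum $h(z)$ head-on via a two-stage telescoping of double sums, whereas you substitute $x=\lambda v_0+w$ into the fibre equation, reduce to the solvability of $w^{2^k}+w=\lambda^{2^k}\Delta+\nu$ with $w\in\ftwom$, and invoke Lemma \ref{lem0} over the subfield; your route is shorter and more conceptual, and it lands on the same key identity $\lambda^{2^k}\Delta+\nu=\Tr_m^n\bigl(\lambda^{2^k}\overline{\nu}/\xi^{2^k}\bigr)$. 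One small point to patch: your telescoping argument only establishes the \emph{necessity} of the validity condition $S(\mu u+\nu)=u$, but in cases (1)(i) and (1)(ii) you also need its \emph{sufficiency}, i.e.\ that a valid $u$ contributes exactly two (not zero) solutions. This follows from the one-line check that $\Tr_1^n(\mu u+\nu)=S(\mu u+\nu)+\overline{S(\mu u+\nu)}=u+\overline{u}=0$ whenever $S(\mu u+\nu)=u\in\ftwom$ (the paper performs the analogous verification before invoking Lemma \ref{lem0}); with that line added, your write-up is complete.
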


\begin{proof}
See Appendix A.
\end{proof}

\begin{remark} When $k=1$, Lemma \ref{lemma-core} reduces to \cite[Lemma 3]{TLZ} which played a central role in computing the boomerang uniformity in the paper. Comparing with \cite[Lemma 3]{TLZ}, our criteria seems a little simpler.
\end{remark}

\section{Proofs of Theorem \ref{thm1}: in preparation} \label{sec3}

From the setting of Theorem \ref{thm1}, if $k$ is even, letting $k':=m-k$, then $k'$ is odd and $\gcd(k',m)=1$. Since
\[R(x,y)^{2^{k'}}= \left(x+\alpha y\right)^{2^{k'}+1}+\left(\beta y\right)^{2^{k'}+1}, \quad \alpha, \beta \in \ftwom, \]
the case of $k$ for Theorem \ref{thm1} is equivalent to the case of $k'$ which is odd. For this reason, from what follows we always assume that $m,k$ are odd integers and $\gcd(m,k)=1$.

\subsection{Univariate polynomial expression of $V_R$}

We now derive a univariate polynomial expression of $V_R$. Let $n=2m$ and $\omega$ be a root of $x^2+x+1=0$. Since $m$ is odd, $\{1,\omega\}$ is a basis of $\ftwon$ over $\ftwom$ and  $\ftwom^2$ is isomorphic to $\ftwon$ under the map
\[ z=(x,y)\mapsto x+\omega y, \qquad \forall x, y\in \ftwom.\]
Hence every element $z\in\ftwon$ can be uniquely represented as $z=x+\omega y$ with $x,y\in \ftwom$. This together with $\overline{z}=x+\overline{\omega}y$, where $\overline{z}:=z^{2^m}$, one obtains
\begin{eqnarray*}
 x=\overline{\omega}z+\omega\overline{z},\;\; y=z+\overline{z}.
\end{eqnarray*}
Substituting $z$ with $\omega^2 z$ gives
$$V_{R}(x,y)=V_{R}(z)=\omega^2(e_{1}z^{2^k+1}+e_{2}\overline{z}^{2^k+1}+e_{3}z^{2^k}\overline{z}+e_{4}z\overline{z}^{2^k}),$$
where
\begin{eqnarray*}
\begin{array}{llllll}
   e_{1}&=&1+\alpha+\alpha^{2^k+1}+\beta^{2^k+1},\;&e_{2}&=&1+\alpha^{2^{k}}+\alpha^{2^k+1}+\beta^{2^k+1}, \\
 e_{3}&=&1+\alpha+\alpha^{2^{k}},\;&e_{4}&=&\alpha+\alpha^{2^{k}}+\alpha^{2^k+1}+\beta^{2^k+1}.
\end{array}
\end{eqnarray*}
Thus, the closed butterfly $V_R$ defined by (\ref{1:close}) is affine equivalent to the polynomial
\begin{eqnarray} \label{2:uni}
e_{1}x^{2^k+1}+e_{2}\overline{x}^{2^k+1}+e_{3}x^{2^k}\overline{x}+e_{4}x\overline{x}^{2^k}.
\end{eqnarray}
Since $\alpha, \beta \in \ftwom \setminus \mathbb{F}_2$ satisfying $\alpha^2+\beta^2+\alpha \beta+1=0$, using $\beta=\theta\alpha+1$ for some $\theta\in\ftwom^*$, we find that a common solution of $(\alpha,\beta)$ is given by
$$(\alpha,\beta)=\left(\frac{1}{1+\theta+\theta^2},\frac{\theta^2}{1+\theta+\theta^2}\right), \quad \theta \in \ftwom^*.$$
Using the above expression, the quadrinomial (\ref{2:uni}) is affine equivalent to
\begin{eqnarray}\label{F}
F(x):=c_{1}x^{2^k+1}+c_{2}\overline{x}^{2^k+1}+c_{3}x^{2^k}\overline{x}+c_{4}x\overline{x}^{2^k},
\end{eqnarray}
where the coefficients $c_i=e_i\alpha^{-(2^k+1)}$ and are explicitly given by
\begin{eqnarray}\label{ci-theta}
\left\{\begin{array}{lll}
  c_{1}&=&1+\theta+\theta^2+(\theta+\theta^2)^{2^k+1}+\theta^{2(2^k+1)},   \\
  c_{2}&=&(1+\theta+\theta^2)^{2^k}+(\theta+\theta^2)^{2^k+1}+\theta^{2(2^k+1)},   \\
  c_{3}&=&1+(\theta+\theta^2)^{2^k+1},  \\
  c_{4}&=&(1+\theta+\theta^2)^{2^{k}+1}+(\theta+\theta^2)^{2^k+1}+\theta^{2(2^k+1)}.
\end{array} \right.
\end{eqnarray}
We conclude that the closed butterfly $V_{R}^{k}(x,y)$ defined by (\ref{1:close}) is affine equivalent to $F(x)$ defined by \eqref{F}, and Theorem \ref{thm1} can be equivalently expressed as properties of $F(x)$, which we describe as below.

\begin{thm}\label{thm2}
Let $n=2m$, $m$ odd, $\gcd(n,k)=1$, $\theta\in\ftwom^*$ and $F(x)$ be the polynomial defined by \eqref{F} and \eqref{ci-theta}. Then $F(x)$ is a permutation on $\ftwon$ with $\beta(F)=4$.
\end{thm}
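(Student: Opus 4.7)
My plan follows the three-step strategy outlined in the introduction: the univariate reduction (already accomplished above), application of Lemma \ref{lemma-core} to count solutions of linear equations, and the equivalent BCT formula from \cite{LQSL}. Since $F$ from \eqref{F} is a quadratic polynomial over $\ftwon$, its bilinear form $B_F(x,a) := F(x+a)+F(x)+F(a)$ is $\ftwo$-bilinear, so the derivative equation $F(x+a)+F(x) = b$ is $\ftwo$-linear in $x$. Expanding via \eqref{F} gives
\[ p\,x^{2^k} + q\,x + r\,\overline{x}^{2^k} + s\,\overline{x} \;=\; b + F(a), \]
with $p,q,r,s\in\ftwon$ explicit in $a$ and the $c_i$'s. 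Applying the Frobenius $\sigma: z\mapsto\overline{z}$ gives a companion equation, and an $\ftwon$-linear combination eliminates the $\overline{x}^{2^k}$ term. After normalization and a suitable affine substitution, the defining relation $\alpha^2+\beta^2+\alpha\beta+1=0$ (equivalently the $\theta$-parametrization of \eqref{ci-theta}) forces the cancellations that bring the equation into the form $L_{\mu,\nu}(x)=0$ of Lemma \ref{lemma-core}, for explicit $\mu,\nu\in\ftwon$ depending on $a,b$.

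From this reduction, the permutation property and $\delta(F)\le 4$ follow directly. For the permutation property, set $b=0$ and verify that for every $a\ne 0$ none of the solvability criteria (Condition (i), Condition (ii), or the four-solution condition) of Lemma \ref{lemma-core} holds, so $F(x+a)+F(x)=0$ has no solution and $F$ is injective. The differential $4$-uniformity $\delta(F)\le 4$ follows from the solution cap of $4$ in the lemma. Both checks split according to whether $a\in\ftwom$, which controls the vanishing of $1+\mu+\overline{\mu}$, and amount to verifying explicit trace conditions such as $\Tr_1^m(\Delta)\ne 0$ or $\overline{\lambda}+\lambda\ne\xi$, where $\xi,\Delta,\lambda$ are as in \eqref{notation}.

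For the boomerang uniformity I apply the formula from \cite{LQSL}, which for quadratic $F$ simplifies to
\[ \mathrm{BCT}_F(a,b) \;=\; \sum_{u\in U_a} \mathrm{DDT}_F(u,b), \qquad U_a := \{u\in\ftwon:\ B_F(u,a) = 0\}. \]
The equation defining $U_a$ is itself of $L_{\mu,\nu}$-type (with $\nu=0$), so Lemma \ref{lemma-core} identifies $U_a$ explicitly. If $|U_a|=2$, i.e.\ $U_a=\{0,a\}$, then $\mathrm{BCT}_F(a,b)=\mathrm{DDT}_F(a,b)\le 4$ for every $b\ne 0$. The remaining case $|U_a|=4$ is handled using the final sentence of Lemma \ref{lemma-core}, which pins down the four solutions as $\{0,1,\lambda,\lambda+1\}$ in the normalized coordinates; this explicit form lets me verify that for each $b\ne 0$ at most one of the three nonzero elements of $U_a$ contributes a nonzero DDT-value, keeping $\mathrm{BCT}_F(a,b)\le 4$. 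Combined with the trivial lower bound $\beta(F)\ge\delta(F)$, this yields $\beta(F)=4$. I expect the main obstacle to be the algebraic identity underpinning the $L_{\mu,\nu}$-reduction --- the precise cancellations among $c_1,\ldots,c_4$ that are forced by \eqref{ab} --- together with the boomerang analysis in the $|U_a|=4$ subcase, where one must rule out simultaneous nonzero DDT-contributions from distinct $u\in U_a\setminus\{0,a\}$.
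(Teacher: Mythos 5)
Your overall strategy coincides with the paper's: reduce the derivative equation to the form $L_{\mu,\nu}(x)=0$ of Lemma \ref{lemma-core}, deduce the permutation property and $\delta(F)=4$ from the trace criteria there, and then bound $\beta(F)$ via Lemma \ref{LQSL} by writing $S_F(a,b)=\sum_{u\in U_a}{\rm DDT}_F(u,b)$ with $U_a=\ker H_a$. Two small inaccuracies in the first half: the anticipated case split ``according to whether $a\in\ftwom$'' never materializes, because the paper proves $v_1\neq 0$ and $1+\mu+\overline{\mu}=\xi^{2^k-1}\neq 0$ for \emph{every} $a\in\ftwon^*$ (Lemma \ref{lem-v1} and Lemma \ref{3:gen-eq}); consequently $|U_a|=4$ for all $a$, so your $|U_a|=2$ branch is vacuous and the whole weight of the argument falls on the four-element case.

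That is where the proposal has a genuine gap. You assert that the explicit description of $U_a$ as $\{0,1,\lambda,\lambda+1\}$ in normalized coordinates ``lets you verify that for each $b\ne 0$ at most one of the three nonzero elements of $U_a$ contributes a nonzero DDT-value,'' but you give no mechanism for that verification, and knowing $U_a$ explicitly is nowhere near sufficient: this claim is precisely the hard part of the theorem, occupying Section 4.2 and Appendices B and C of the paper. Concretely, one first shows that ${\rm DDT}_F(z,b)\in\{0,4\}$ with the value $4$ attained exactly when the two trace conditions \eqref{4:4roots} hold; one then needs the structural identities $\sum_{z\in Z_a}F(z)=0$ and the invariance of $E(z)=(\theta+1)(z^2+\overline{z}^2)+\theta^2 z\overline{z}$ on $Z_a=\{a,\eta_a,a+\eta_a\}$ (Lemma \ref{lem-invariant}) to deduce from \eqref{Delta} that $\sum_{z\in Z_a}\Tr_1^m(\Delta_z)=0$, so either exactly one $z$ passes the first test (giving $S_F(a,b)\le 4$) or all three do; and in the latter case one needs the permutation identity $H(a)=F(a+\eta_a)$, $H(\eta_a)=F(a)$, $H(a+\eta_a)=F(\eta_a)$ (Lemma \ref{lem-Hz}) together with \eqref{V} to show that the second trace condition then fails for all three $z$, whence $S_F(a,b)=0$. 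Without a substitute for these facts --- in particular the observation that the relevant traces sum to zero over $Z_a$ and that the numerators $H(z)$ permute the values $F(z)$ --- the statement ``at most one $u$ contributes'' is exactly what remains to be proved, not a consequence of the explicit form of $U_a$.
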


\subsection{Discussion: $F(x+a)+F(x)=b$}

Now to prove our main results, i.e.,  Theorem \ref{thm1} and equivalently Theorem \ref{thm2}, we first study for any $a\in\ftwon^*$, $b\in\ftwon$ the equation
\begin{equation}\label{3:equation0}
F(x+a)+F(x)=b.
\end{equation}
Here $F(x)$ is given by \eqref{F}. Denote
\begin{eqnarray} \label{3:ha}
H_a(x):=F(x+a)+F(x)+F(a).
\end{eqnarray}
Since $F(x)$ is a quadratic polynomial, we have
\begin{equation*}
H_a(x)=\tau_{1}'\overline{x}^{2^k}+\tau_{2}'x^{2^k}+\tau_{3}'\overline{x}+\tau_{4}'x,
\end{equation*}
\begin{eqnarray*}
\begin{array}{llllll}
  \tau_{1}'&=&c_{2}\overline{a}+c_{4}a, &\tau_{2}'&=&c_{1}a +c_{3}\overline{a}, \\
 \tau_{3}'&=&c_{2}\overline{a}^{2^k}+c_{3}a^{2^k},  &\tau_{4}'&=&c_{1}a^{2^k} + c_{4}\overline{a}^{2^k}.
\end{array}
\end{eqnarray*}
Equation \eqref{3:equation0} becomes
\begin{eqnarray} \label{3:hab} H_a(x)=F(a)+b.\end{eqnarray}
Substituting $x$ with $ax$, the above equation becomes
\begin{equation} \label{equation1}
\tau_{1}\overline{x}^{2^k}+\tau_{2}x^{2^k}+\tau_{3}\overline{x}+\tau_{4}x+\tau_{5}= 0
\end{equation}
where $\tau_{5}=F(a)+b$ and other $\tau_i$'s are given by
\begin{eqnarray*}
\begin{array}{llllll}
  \tau_{1}&=&c_{2}\overline{a}^{2^k+1}+c_{4}a\overline{a}^{2^k}, &\tau_{2}&=&c_{1}a^{2^k+1} +c_{3}a^{2^k}\overline{a}, \\
 \tau_{3}&=&c_{2}\overline{a}^{2^k+1}+c_{3}a^{2^k}\overline{a},  &\tau_{4}&=&c_{1}a^{2^k+1} + c_{4}a\overline{a}^{2^k}.
\end{array}
\end{eqnarray*}
Taking $2^m$-th power on both sides of \eqref{equation1} gives
\begin{eqnarray} \label{equation1-1}
\overline{\tau}_{1}x^{2^k}+\overline{\tau}_{2}\overline{x}^{2^k}+
\overline{\tau}_{3}x+\overline{\tau}_{4}\overline{x}+\overline{\tau}_{5}= 0,
\end{eqnarray}
then by $\overline{\tau}_{2}\cdot \eqref{equation1}+\tau_{1}\cdot \eqref{equation1-1}$ one has
\begin{equation} \label{equation2}
v_{1}x^{2^k}+v_{2}\overline{x}+v_{3}x+v_{4} = 0,
\end{equation}
 where
\begin{eqnarray*}
\begin{array}{llllll}
 v_{1}&=&\tau_{1}\overline{\tau}_{1}+\tau_{2}\overline{\tau}_{2}, &v_2&=&\tau_{1}\overline{\tau}_{4}+\overline{\tau}_{2}\tau_{3}, \\
 v_{3}&=&\tau_{1}\overline{\tau}_{3}+\overline{\tau}_{2}\tau_{4},  &v_4&=&\tau_{1}\overline{\tau}_{5}+\overline{\tau}_{2}\tau_{5}.
\end{array}
\end{eqnarray*}
It is easy to verify that the $v_i$'s and $\tau_i$'s satisfy the following properties:
\begin{itemize}
\item[(i)] $v_1+v_2+v_3=0$;

\item[(ii)] $\tau_1+\tau_2=\tau_3+\tau_4=\tau_5+b$;

\item[(iii)] $v_{4}=v_{1}+\tau_{1}\overline{b}+\overline{\tau}_{2}b$;

\item[(iv)] $\tau_{1}\overline{v}_{3}+\tau_{2}v_{2}+\tau_{3}v_{1}=
    \tau_{1}\overline{v}_{2}+\tau_{2}v_{3}+\tau_{4}v_{1}=\tau_{1}\overline{v}_{4}+\tau_{2}v_{4}+\tau_{5}v_{1}=0$.
\end{itemize}
Hence if $v_1\ne 0$, we can write \eqref{equation2} as
\begin{equation} \label{differential-equation}
x^{2^k}+\frac{v_{2}}{v_{1}}\overline{x}+(1+\frac{v_{2}}{v_{1}})x+\frac{\tau_{1}\overline{b}+
\overline{\tau}_{2}b}{v_{1}} +1=0.
\end{equation}

\begin{lem} \label{lem-4}
If $v_{1}\neq 0$, then \eqref{equation1}  and  \eqref{differential-equation} have the same set of solutions in $\ftwon$.
\end{lem}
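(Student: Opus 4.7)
The plan is to establish the two inclusions separately. For the forward direction, any $x \in \ftwon$ satisfying (\ref{equation1}) automatically satisfies its $2^m$-th power (\ref{equation1-1}), and hence the $\ftwon$-linear combination $\overline{\tau}_2 \cdot$(\ref{equation1}) $+ \tau_1 \cdot$(\ref{equation1-1}), which after the cancellation $\overline{\tau}_2\tau_1 + \tau_1\overline{\tau}_2 = 0$ and the identification $\overline{\tau}_2\tau_2 + \tau_1\overline{\tau}_1 = v_1$ is precisely (\ref{equation2}). Dividing by $v_1 \ne 0$ and replacing $v_3/v_1$ by $1 + v_2/v_1$ and $v_4/v_1$ by $1 + (\tau_1\overline{b}+\overline{\tau}_2 b)/v_1$, using properties (i) and (iii), produces (\ref{differential-equation}).

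The heart of the argument is the reverse direction, and I would open it with the key observation that $v_1 \in \ftwom$, which is immediate from $\overline{v}_1 = \overline{\tau}_1\tau_1 + \overline{\tau}_2\tau_2 = v_1$. Suppose $x \in \ftwon$ satisfies (\ref{differential-equation}), equivalently (\ref{equation2}). Solving for $x^{2^k}$ gives
$$x^{2^k} = \frac{v_2\overline{x} + v_3 x + v_4}{v_1},$$
and taking the $2^m$-th power of this identity, crucially using $\overline{v}_1 = v_1$, yields
$$\overline{x}^{2^k} = \frac{\overline{v}_2 x + \overline{v}_3\overline{x} + \overline{v}_4}{v_1}.$$
Substituting both expressions into the left-hand side of (\ref{equation1}) and collecting the coefficients of $x$, $\overline{x}$, and the constant, one obtains
$$\frac{\tau_1\overline{v}_2 + \tau_2 v_3 + \tau_4 v_1}{v_1}, \qquad \frac{\tau_1\overline{v}_3 + \tau_2 v_2 + \tau_3 v_1}{v_1}, \qquad \frac{\tau_1\overline{v}_4 + \tau_2 v_4 + \tau_5 v_1}{v_1},$$
each of which vanishes by property (iv). Hence the left-hand side of (\ref{equation1}) is zero and $x$ satisfies (\ref{equation1}).

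The main conceptual step, and really the only non-routine one, is the realization that $v_1$ lies in $\ftwom$: this is precisely what allows $x^{2^k}$ and $\overline{x}^{2^k}$ to share the common denominator $v_1$ after conjugation, and it is what lines the three resulting coefficients up exactly with the three identities of property (iv). Everything else is bookkeeping; the verification of (i)--(iv) from the explicit definitions of the $\tau_i$ and $v_i$ is a routine (if slightly tedious) calculation that I would state as a short separate check before invoking it here.
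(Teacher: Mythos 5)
Your proposal is correct and follows essentially the same route as the paper: the forward direction is the linear-combination derivation, and the reverse direction hinges on the observation $\overline{v}_1=v_1$, solving (\ref{equation2}) for $x^{2^k}$ and its conjugate for $\overline{x}^{2^k}$, substituting into (\ref{equation1}), and invoking the identities in property (iv) to make the three coefficients vanish. The paper's own proof is exactly this computation (it dispenses with the forward direction in one sentence, as you do).
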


\begin{proof}
It suffices to show that \eqref{equation1} can be derived from \eqref{differential-equation}. Noting that $v_1=\overline{v}_1$, by using \eqref{equation2}, we obtain
\begin{equation*}
x^{2^k}=\frac{v_{2}\overline{x}+v_{3}x+v_{4}}{v_{1}} {\;\;\rm and\;\;} \overline{x}^{2^k}=\frac{\overline{v}_{2}x+\overline{v}_{3}\overline{x}+\overline{v}_{4}}{v_{1}}.
\end{equation*}
Then we can compute
\begin{eqnarray*}
&&\tau_{1}\overline{x}^{2^k}+\tau_{2}x^{2^k}+\tau_{3}\overline{x}+\tau_{4}x+\tau_{5}\\
&=&\frac{\tau_{1}}{v_{1}}(\overline{v}_{2}x+\overline{v}_{3}\overline{x}+\overline{v}_{4})
 +\frac{\tau_{2}}{k_{1}}(v_{2}\overline{x}+v_{3}x+v_{4})
 +(\tau_{3}\overline{x}+\tau_{4}x+\tau_{5})\\
&=&\frac{1}{v_{1}}\left[\overline{x}(\tau_{1}\overline{v}_{3}+\tau_{2}v_{2}+\tau_{3}v_{1})
 +x(\tau_{1}\overline{v}_{2}+\tau_{2}v_{3}+\tau_{4}v_{1})
 +(\tau_{1}\overline{v}_{4}+\tau_{2}v_{4}+\tau_{5}v_{1})\right] \\
&=&0,
\end{eqnarray*}
which is \eqref{equation1}. This completes the proof of Lemma \ref{lem-4}.
\end{proof}
Indeed we claim that
\begin{lem} \label{lem-v1}
$v_1 \ne 0$ for any $\theta \in \ftwom^*$ and $a \in \ftwon^*$.
\end{lem}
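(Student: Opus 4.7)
My plan is to reduce $v_1 = \tau_1\overline{\tau}_1 + \tau_2\overline{\tau}_2$ to a polynomial in the two $\ftwom$-invariants $N := a\overline{a}$ and $T := a+\overline{a}$ of $a$, and then rule out $v_1 = 0$ by a trace argument in $\ftwom$. Since every $c_i$ lies in $\ftwom$, conjugation simply swaps $a \leftrightarrow \overline{a}$ in $\tau_1$ and $\tau_2$. Expanding the products and using $a^{2^k+1}\overline{a}^{2^k+1} = N^{2^k+1}$ together with $a^{2^k+2}\overline{a}^{2^k} + \overline{a}^{2^k+2}a^{2^k} = N^{2^k}T^2$, one obtains
\begin{equation*}
v_1 = N^{2^k}\bigl(S^2 N + E T^2\bigr),\qquad S := c_1+c_2+c_3+c_4,\qquad E := c_1c_3+c_2c_4.
\end{equation*}
Since $N\ne 0$ for $a\in\ftwon^*$, it suffices to show $S^2 N + E T^2 \ne 0$.

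The crux is to compute $S$ and $E$ explicitly from \eqref{ci-theta}. Introducing $\gamma = 1+\theta+\theta^2 = 1+\delta$, $\delta = \theta+\theta^2$ and $\eta = \theta^2$, and using $(1+\delta)^{2^k} = 1+\delta^{2^k}$, a careful collection of terms yields
\begin{equation*}
S = \theta^{2^k+1}\bigl(1+\theta+\theta^{2^k}\bigr), \qquad E = \theta^{2^{k+1}}(1+\theta)\bigl(1+\theta+\theta^{2^k}\bigr)^2,
\end{equation*}
whence $E/S^2 = (1+\theta)/\theta^2 = 1/\theta + (1/\theta)^2$. The nonvanishing of $S$ follows by noting that $1+\theta+\theta^{2^k} = 0$ would force $\theta^{2^{2k}} = \theta$, placing $\theta$ in $\mathbb{F}_{2^{\gcd(2k,m)}} = \ftwo$ (using $m$ odd and $\gcd(k,m)=1$), and a direct check rules out $\theta = 0, 1$.

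To conclude, if $T = 0$ then $v_1 = N^{2^k+1}S^2 \ne 0$. Otherwise, write $a = x + \omega y$ with $x,y\in\ftwom$, so that $T = y$ and $N = x^2 + xy + y^2$, giving
\begin{equation*}
N/T^2 = (x/y)^2 + (x/y) + 1,
\end{equation*}
whose $\Tr_1^m$-trace equals $\Tr_1^m(1) = 1$ since $m$ is odd. On the other hand, $\Tr_1^m(E/S^2) = 2\,\Tr_1^m(1/\theta) = 0$, so $E/S^2 \ne N/T^2$; this forces $S^2 N + E T^2 \ne 0$ and hence $v_1 \ne 0$.

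I expect the main obstacle to be the factorization of $E$: the six cross-terms in $c_1c_3+c_2c_4$ must be organized carefully, invoking identities such as $1+\gamma^{2^k} = \delta^{2^k}$ and $1+\gamma^{2^k}\delta+\delta^{2^k+1} = \gamma$, to reach the clean form $\theta^{2^{k+1}}(1+\theta)(1+\theta+\theta^{2^k})^2$. Once this identity is in hand, the trace argument closes the proof immediately.
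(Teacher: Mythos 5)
Your proof is correct and takes essentially the same route as the paper: your factorization $v_1=N^{2^k}(S^2N+ET^2)$ with $E/S^2=(1+\theta)/\theta^2$ is precisely the paper's identity $v_1=(a\overline{a})^{2^k+1}\theta^{2^{k+1}}(1+\theta+\theta^{2^k})^2\bigl[(\theta+1)(\gamma+\overline{\gamma})+\theta^2\bigr]$ with $\gamma=\overline{a}/a$, and both arguments ultimately rest on $\Tr_1^m\bigl((1+\theta)/\theta^2\bigr)=0$. The only difference is cosmetic and lies in the endgame: the paper assumes $(\theta+1)(\gamma+\overline{\gamma})=\theta^2$, invokes the quadratic solvability criterion of Lemma \ref{lem0-0} to force $\gamma\in\ftwom$, hence $\gamma=\overline{\gamma}=1/\gamma=1$ and the contradiction $\theta^2=0$, whereas you compare $\Tr_1^m(N/T^2)=1$ with $\Tr_1^m(E/S^2)=0$ directly --- an equivalent and, if anything, slightly cleaner finish.
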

\begin{proof}
By using the values of $c_i$'s in \eqref{ci-theta}, we can compute that
\begin{eqnarray*}
v_{1}&=&\tau_{1}\overline{\tau}_{1}+\tau_{2}\overline{\tau}_{2}
\\&=&(a\overline{a})^{2^k}
\left[(c_{1}c_{3}+c_{2}c_{4})(a^{2}+\overline{a}^{2})+(c_1^2+c_{2}^2+c_{3}^2+c_{4}^2)a\overline{a}\right],
\\&=&(a\overline{a})^{2^k+1}(\theta^2)^{2^k}(1+\theta+\theta^{2^k})^{2}
\left[(\theta+1)(\overline{\gamma}+\gamma)+\theta^2\right],
\end{eqnarray*}
where $\gamma=\overline{a}/a$. Since $m$ is odd, $1+\theta+\theta^{2^k}\ne 0$. Suppose $(\theta+1)(\overline{\gamma}+\gamma)+\theta^2= 0$. Obviously $\theta \ne 1$. Since $\overline{\gamma}=a/\overline{a}=1/\gamma$, we have
\begin{eqnarray} \label{3:theta} \gamma^2+\frac{\theta^2}{\theta+1} \, \gamma+1=0.\end{eqnarray}
Noting that
\[\Tr^m_1 \left(\frac{(\theta+1)^2}{\theta^4}\right)=\Tr_1^m\left(\frac{1}{\theta^2}+\frac{1}{\theta^4}\right)=0,\]
by Lemma \ref{lem0-0}, Equation (\ref{3:theta}) is solvable in $\ftwom$, that is $\gamma=\overline{a}/a \in \ftwom^*$. However, the relation $\gamma=\overline{\gamma}$ implies that
\[a^2=\overline{a}^2 \Longrightarrow a=\overline{a} \ne 0,\]
hence $\gamma=1$, and $(\theta+1)(\overline{\gamma}+\gamma)+\theta^2=\theta^2=0$, contradiction to the fact that $\theta \ne 0$. Thus $(\theta+1)(\overline{\gamma}+\gamma)+\theta^2 \ne 0$ and hence $v_1\ne 0$ for any $a \in \ftwon^*$.
\end{proof}

Returning to Equation \eqref{differential-equation} and comparing it with Lemma \ref{lemma-core}, we have

\begin{lem} \label{3:gen-eq}
For Equation \eqref{differential-equation}, denote
\[\mu=\frac{v_2}{v_1}, \qquad \gamma=\frac{\overline{a}}{a}. \]
Let $\xi \in\ftwom$ and $\lambda\in\ftwon$ be defined by the equations
\begin{eqnarray*} 
\xi^{2^k-1}=1+\mu+\overline{\mu},\;\;\; \lambda^{2^k}+\lambda=\mu \xi.
\end{eqnarray*}
Then we have
\begin{eqnarray*} \label{3:eqgen} 1 +\mu+\overline{\mu} \ne 0, \quad \mbox{ and } \quad \lambda+\overline{\lambda}=\xi.
\end{eqnarray*}
Hence by Lemma \ref{lemma-core}, Equation \eqref{differential-equation} has either 0 or 4 solutions for any $a \in \ftwon^*$ and any $b \in \ftwon$. Moreover
\begin{eqnarray} \label{3:mu}
\mu&=& \frac{\theta^2(1+\theta^{2^k})\gamma^{2^k}+(\theta^2)^{2^k}(\theta+1)\gamma+((\theta+1)^{2^k+1}+1)^{2}}
{(\theta^2)^{2^k}[(\theta+1)(\overline{\gamma}+\gamma)+\theta^2]}, \\
\label{3:xi}
\xi&=&\frac{(\theta+1)(\overline{\gamma}+\gamma)+\theta^2}{\theta^2},
\end{eqnarray}
and $\lambda$ can be taken as
\begin{eqnarray}
\label{3:lam}
\lambda &=&\frac{(1+\theta)\overline{a}}{a\theta^2}+\frac{1}{\theta^2}+\omega.
\end{eqnarray}

\end{lem}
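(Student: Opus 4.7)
The plan is to verify the three explicit formulas \eqref{3:mu}, \eqref{3:xi}, \eqref{3:lam} for $\mu$, $\xi$, $\lambda$ by direct substitution and algebraic manipulation, after which the stated conclusion falls out of Lemma \ref{lemma-core}. The structure is short: $\mu$ is obtained by dividing an explicit $v_2$ by the already-known $v_1$; $\xi$ is then read off from $1+\mu+\overline{\mu}$; and the candidate $\lambda$ is verified directly.

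Concretely, I would first expand $v_2=\tau_1\overline{\tau}_4+\overline{\tau}_2\tau_3$ using the definitions of $\tau_i$ in terms of $c_i, a, \overline{a}$ and then substitute the values of $c_i$ from \eqref{ci-theta}. Dividing by the factored form
\begin{equation*}
v_1=(a\overline{a})^{2^k+1}(\theta^2)^{2^k}(1+\theta+\theta^{2^k})^{2}\bigl[(\theta+1)(\overline{\gamma}+\gamma)+\theta^2\bigr]
\end{equation*}
from the proof of Lemma \ref{lem-v1}, and using $\gamma\overline{\gamma}=1$, I expect the quotient to collapse to \eqref{3:mu}. Forming $\overline{\mu}$ by the substitution $\gamma\leftrightarrow 1/\gamma$ and simplifying gives
\begin{equation*}
1+\mu+\overline{\mu}=\left(\frac{(\theta+1)(\overline{\gamma}+\gamma)+\theta^2}{\theta^2}\right)^{2^k-1},
\end{equation*}
which pins down $\xi$ as in \eqref{3:xi}. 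Since $\gcd(k,m)=1$ forces $\gcd(2^k-1,2^m-1)=1$, raising to the $(2^k-1)$-th power is a bijection on $\ftwom^*$, so this $\xi$ is the unique element of $\ftwom^*$ with $\xi^{2^k-1}=1+\mu+\overline{\mu}$; its non-vanishing is equivalent to $v_1\neq 0$, already proved in Lemma \ref{lem-v1}, hence $1+\mu+\overline{\mu}\neq 0$.

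For $\lambda$, I would take the candidate \eqref{3:lam} and check the two required identities by direct substitution. Using $\overline{\omega}=\omega+1$ and $\theta\in\ftwom$, one gets
\begin{equation*}
\lambda+\overline{\lambda}=\frac{(1+\theta)(\gamma+\overline{\gamma})}{\theta^2}+1=\xi
\end{equation*}
immediately. Verifying $\lambda^{2^k}+\lambda=\mu\xi$ is the algebraically heaviest step: since $k$ is odd we have $\omega^{2^k}=\omega+1$, and the Frobenius acts cleanly on $\lambda$ because $\theta\in\ftwom$; the resulting expression $\lambda^{2^k}+\lambda$ must then be matched against $\mu\xi$ after substituting $\mu$ from \eqref{3:mu} and $\xi$ from \eqref{3:xi} and clearing denominators. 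This verification is the principal obstacle and its details are likely best relegated to the appendix.

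Finally, once $1+\mu+\overline{\mu}\neq 0$ and $\lambda+\overline{\lambda}=\xi$ have been established, case (1)(i) of Lemma \ref{lemma-core} is ruled out by the first condition and case (1)(ii) by the second (which demands $\lambda+\overline{\lambda}=\xi+1$). Hence the only surviving possibilities for the solution count of \eqref{differential-equation} in $\ftwon$ are $0$ or $4$, which proves Lemma \ref{3:gen-eq}.
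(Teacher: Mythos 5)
Your proposal is correct and follows essentially the same route as the paper: compute $\mu=v_2/v_1$ explicitly, simplify $1+\mu+\overline{\mu}$ to the $(2^k-1)$-th power of the expression in \eqref{3:xi} (its nonvanishing coming from the same factor $(\theta+1)(\overline{\gamma}+\gamma)+\theta^2\ne 0$ established in Lemma \ref{lem-v1}), and then verify the candidate $\lambda$. The only cosmetic difference is in the step you flag as heaviest: rather than substituting $\lambda$ into $\lambda^{2^k}+\lambda$ and matching against $\mu\xi$, the paper computes $\mu\xi$ first and observes it collapses to $L^{2^k}+L+1$ with $L=\frac{(1+\theta)\gamma}{\theta^2}+\frac{1}{\theta^2}$, so that $\lambda=L+\omega$ is immediate from $\omega^{2^k}+\omega=1$ for odd $k$.
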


\begin{proof}
All the above facts can be checked easily with some computation. First,
\begin{align}
\mu=&\frac{v_{2}}{v_{1}}=\frac{(c_{1}c_{4}+c_{2}c_{3})\gamma^{2^k}+(c_{1}c_{3}+c_{2}c_{4})\gamma+(c_{3}+c_{4})^{2}}{(c_{1}c_{3}+c_{2}c_{4})(\gamma+
\overline{\gamma})+(c_{1}+c_{2}+c_{3}+c_{4})^{2}}.  \nonumber
\end{align}
Then using the values of $c_i$'s in \eqref{ci-theta}, Equation \eqref{3:mu} can be easily verified. Second, using the value of $\mu$ in \eqref{3:mu}, one can obtain
\begin{eqnarray*} \label{equation-1+zeta}
1+\mu+\overline{\mu}
&=&1+\frac{\theta^2(1+\theta^{2^k})(\overline{\gamma}+\gamma)^{2^k}+(\theta^2)^{2^k}(\theta+1)(\overline{\gamma}+\gamma)}
{(\theta^2)^{2^k}[(\theta+1)(\overline{\gamma}+\gamma)+\theta^2]} \nonumber \\
&=&\frac{\theta^2(1+\theta^{2^k})(\overline{\gamma}+\gamma)^{2^k}+(\theta^2)^{2^k}\theta^2}
{(\theta^2)^{2^k}[(\theta+1)(\overline{\gamma}+\gamma)+\theta^2]}  \nonumber\\
&=&\left(\frac{(\theta+1)(\overline{\gamma}+\gamma)+\theta^2}{\theta^2}\right)^{2^k-1}.
\end{eqnarray*}
Now Equation \eqref{3:xi} is clear due to the fact that $\gcd(2^k-1,2^n-1)=1$. Third, one has
\begin{eqnarray*}\label{muxi}
\mu \xi&=&\frac{\theta^2(1+\theta^{2^k})\gamma^{2^k}+(\theta^2)^{2^k}(\theta+1)\gamma+((\theta+1)^{2^k+1}+1)^{2}}
{(\theta^2)^{2^k}[(\theta+1)(\overline{\gamma}+\gamma)+\theta^2]} \frac{(\theta+1)(\overline{\gamma}+\gamma)+\theta^2}{\theta^2} \nonumber\\
&=&\left(\frac{(1+\theta)\gamma}{\theta^2}+\frac{1}{\theta^2}\right)^{2^k}+
\left(\frac{(1+\theta)\gamma}{\theta^2}+\frac{1}{\theta^2}\right)+1.
\end{eqnarray*}
Since $k$ is odd, we have $1=\omega^{2^k}+\omega$, the value of $\lambda$ given by Equation \eqref{3:lam} is a solution to the equation $\lambda^{2^k}+\lambda=\mu \xi$. Using this value of $\lambda$, one can easily verify that $\lambda+\overline{\lambda}=\xi$. This completes the proof of Lemma \ref{3:gen-eq}.

\end{proof}

\section{Proof of Theorem \ref{thm2}} \label{sec4}

\subsection{Permutation and differential uniformity}

For any $a\in\ftwon^*$ and $b\in\ftwon$, consider the equation
\begin{equation}\label{4:equation0}
F(x+a)+F(x)=b,
\end{equation}
where $F(x)$ is given by \eqref{F}. It is known that Equation \eqref{4:equation0} is equivalent to Equation \eqref{differential-equation}, which by Lemma \ref{3:gen-eq}, always has either 0 or 4 solutions in $\ftwon$. Hence $F$ is differential $4$-uniform.

Now for any $a \in \ftwon^*$, consider the equation
\begin{equation}\label{4:equation01}
F(x+a)+F(x)=0,
\end{equation}
that is, $b=0$ in \eqref{4:equation0}. Accordingly, this is equivalent to Equation \eqref{differential-equation} with $b=0$, which can be written explicitly as
\begin{eqnarray*} \label{4:eqper} x^{2^k}+\mu \overline{x}+(1+\mu)x+1=0,\end{eqnarray*}
where $\mu$ is defined in \eqref{3:mu}. According to Lemma \ref{3:gen-eq}, since $m$ is odd and $\overline{\lambda}+\lambda=\xi$, we have
\[\Tr_1^n \left(\frac{\lambda^{2^k}}{\xi^{2^k}}\right)=\Tr_1^n \left(\frac{\lambda}{\xi}\right)=\Tr_1^m \left(\frac{\overline{\lambda}+\lambda}{\xi}\right)=\Tr_1^m(1)=1,\]
hence by Lemma \ref{lemma-core}, Equation \eqref{differential-equation} with $b=0$ and equivalently Equation \eqref{4:equation01} is not solvable in $\ftwon$ for any $a \in \ftwon^*$. So we concludes that $F$ is a permutation.

\subsection{Boomerang uniformity}
To compute the boomerang uniformity of $F$, we need considerably more effort.

First recall a new formulation of the boomerang uniformity of $F(x)$ in \cite{LQSL}, which allow us to compute $\beta(F)$ conveniently without using the compositional inverse $F^{-1}$:
\begin{lem}\label{LQSL}
Let $F(x)$ be a permutation over $\ftwon$. Denote by $S_{F}(a,b)$ the number of solutions $(x,y)\in \ftwon^2$ of the equation system
\begin{numcases}{}
F(x+a)+F(y+a)=b, \nonumber \\
F(x)+F(y)=b. \nonumber
\end{numcases}
Then
\[\beta(F)=\max \left\{S_F(a,b): a,b\in \ftwon^*\right\}. \]
\end{lem}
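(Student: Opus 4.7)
The plan is to establish, for each fixed pair $(a,b)\in(\ftwon^*)^2$, a bijection between the set
\[T_F(a,b):=\left\{x\in\ftwon : F^{-1}(F(x)+b)+F^{-1}(F(x+a)+b)=a\right\}\]
counted by the BCT of $F$ at $(a,b)$, and the set of pairs
\[U_F(a,b):=\left\{(x,y)\in\ftwon^2 : F(x+a)+F(y+a)=b,\ F(x)+F(y)=b\right\}\]
whose cardinality is $S_F(a,b)$. Once $|T_F(a,b)|=|U_F(a,b)|$ is established for every such $(a,b)$, taking maxima over $(a,b)\in(\ftwon^*)^2$ will yield the desired identity $\beta(F)=\max S_F(a,b)$.

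First I would define a map $\Phi:T_F(a,b)\to\ftwon^2$ by
\[\Phi(x)=\bigl(x,\,F^{-1}(F(x)+b)\bigr).\]
For $x\in T_F(a,b)$, put $y:=F^{-1}(F(x)+b)$, so that $F(y)=F(x)+b$, i.e.\ $F(x)+F(y)=b$. The defining BCT relation then rewrites as $F^{-1}(F(x+a)+b)=y+a$; applying $F$ to both sides gives $F(y+a)=F(x+a)+b$, i.e.\ $F(x+a)+F(y+a)=b$. Hence $\Phi(x)\in U_F(a,b)$.

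Next I would check that $\Phi$ is bijective. Since $F$ is a permutation, for any $(x,y)\in U_F(a,b)$ the identity $F(x)+F(y)=b$ forces $y=F^{-1}(F(x)+b)$, so $y$ is uniquely recovered from $x$; this immediately gives injectivity of $\Phi$. For surjectivity, starting from $(x,y)\in U_F(a,b)$, the two relations yield $y=F^{-1}(F(x)+b)$ and $y+a=F^{-1}(F(x+a)+b)$; adding these in characteristic two produces exactly the BCT condition on $x$, so $x\in T_F(a,b)$ and $\Phi(x)=(x,y)$.

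There is no serious obstacle here: the lemma is a clean reformulation of the boomerang count that trades the compositional inverse $F^{-1}$ for one additional free variable $y$, and the argument reduces to a short algebraic manipulation that hinges only on $F$ being a permutation. The sole point worth noting is that $U_F(a,b)$ automatically contains the ``trivial'' pairs coming from differential solutions of $F(x+a)+F(x)=b$ (namely those with $y=x+a$), but exactly the corresponding values of $x$ are counted on the BCT side under $\Phi$, so no separate bookkeeping is needed.
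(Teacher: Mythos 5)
Your proof is correct. The paper itself does not prove this lemma --- it is quoted verbatim from \cite{LQSL} --- so there is no internal proof to compare against, but your argument is exactly the standard one from that reference: the substitution $y=F^{-1}(F(x)+b)$ gives a bijection between the BCT-counting set and the solution set of the system, the key point being that in characteristic two the relation $F^{-1}(F(x)+b)+F^{-1}(F(x+a)+b)=a$ is equivalent to the pair of conditions $y=F^{-1}(F(x)+b)$ and $y+a=F^{-1}(F(x+a)+b)$, which after applying the permutation $F$ become precisely the two equations defining $S_F(a,b)$. All steps (well-definedness, injectivity via the first coordinate, surjectivity via recovering $y$ from $F(x)+F(y)=b$) are verified, and taking the maximum over $a,b\in\ftwon^*$ finishes the argument. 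No gaps.
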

Since $\beta(F) \ge \delta(F)=4$, to complete the proof of Theorem \ref{thm2}, it suffices to show that $S_{F}(a,b)\leq 4$ for any $a,b \in \ftwon^*$. Now for any fixed $a,b \in \ftwon^*$, the value $S_{F}(a,b)$ is equal to the number of solutions $(x,y) \in \ftwon^2$ of the following equation system
\begin{numcases}{}
F(x+a)+F(x)+F(y+a)+F(y)=0, \label{boomerang-diff-1} \\
F(x)+F(y)=b. \label{boomerang-diff-2}
\end{numcases}
Since $b \ne 0$, obviously $x+y \ne 0$.

We first consider Equation \eqref{boomerang-diff-1}. Using the function $H_a(x)$ defined in \eqref{3:ha} which is linear in both $a$ and $x$, Equation \eqref{boomerang-diff-1} can be rewritten as
\[H_a(x)+H_a(y)=H_a(x+y)=0. \]
Letting $z=x+y \in \ftwon^*$, tracing back to Equation \eqref{3:hab} with the right hand being 0, the above equation has roots $ax$ where $x$ satisfies the equation
\[x^{2^k}+ \mu \overline{x}+(1+\mu)x=0,\]
and $\mu$ is given by Equation \eqref{3:mu} in Lemma \ref{3:gen-eq}. Using Lemma \ref{3:gen-eq} and Lemma \ref{lemma-core}, we conclude that $z=ax \in Z_a$ where the set $Z_a$ is given by
\begin{eqnarray*} \label{4:za} Z_a:=\left\{a,\eta_a, a+\eta_a\right\},\end{eqnarray*}
and
\begin{eqnarray}\label{eta}
\eta_a=a\lambda=\frac{(1+\theta)\overline{a}}{\theta^2}+\left(\frac{1}{\theta^2}+w\right)a.
\end{eqnarray}

Next, we consider \eqref{boomerang-diff-2}. Using $y=x+z$, Equation \eqref{boomerang-diff-2} becomes
\begin{eqnarray} \label{4:hz} H_z(x)=F(z)+b. \end{eqnarray}
It is known from Lemma \ref{3:gen-eq} that the above equation has at most four solutions in $\ftwon$ for each $z \in Z_a$, so immediately we obtain $\beta(F) \le 12$. To find the exact value of $\beta(F)$, we need to consider more carefully the solvability of \eqref{4:hz} for $z \in Z_a$.

Using the equivalence between \eqref{3:hab} and \eqref{differential-equation} and applying Lemma \ref{3:gen-eq}, we conclude that for any $z\in Z_a$, Equation \eqref{4:hz} is equivalent to
\begin{align}  \label{boomerang-2}
x^{2^k}+\mu_z\overline{x}+(1+\mu_z)x+\nu_z= 0,
\end{align}
where $\mu_z$ and $\nu_z$ are given by
\begin{eqnarray}\label{boomerang-nu}
  \mu_z &=& \frac{\theta^2(1+\theta^{2^k})\gamma^{2^k}_z+(\theta^2)^{2^k}(\theta+1)\gamma_z+((\theta+1)^{2^k+1}+1)^{2}}
{(\theta^2)^{2^k}\left[(\theta+1)(\overline{\gamma}_z+\gamma_z)+\theta^2\right]}, \nonumber \\
   \nu_z &=& 1+\frac{1}{z^{2^k}}
\frac{(c_{1}\overline{z}+c_{3}z)b+(c_{2}\overline{z}+c_{4}z)\overline{b}}
{(\theta^2)^{2^k}(1+\theta+\theta^{2^k})^{2}\left[(\theta+1)(z^{2}+
\overline{z}^{2})+\theta^2z\overline{z}\right]}. \label{4:vz}
\end{eqnarray}
Here $\gamma_z=\overline{z}/z$. Further, for $\xi_z$ satisfying $\xi^{2^k-1}_z=1+\mu_z+\overline{\mu}_z$ and $\lambda_z$ satisfying $\lambda_z^{2^k}+\lambda_z=\mu_z \xi_z$, we have
\begin{eqnarray} \label{4:xi} \xi_z &=&\frac{(\theta+1)(\overline{\gamma}_z+\gamma_z)+\theta^2}{\theta^2},\\
\label{4:lam} \lambda_z &=&\frac{(1+\theta)\gamma_z}{\theta^2}+\frac{1}{\theta^2}+w.\end{eqnarray}
Thus, by Lemma \ref{3:gen-eq} and by Lemma \ref{lemma-core}, Equation \eqref{boomerang-2} (and equivalently \eqref{4:hz}) has either $0$ or $4$ solutions in $\ftwon$, and it has $4$ solutions in $\ftwon$ if and only if
\begin{eqnarray} \label{4:4roots} \Tr_{1}^{m}\left(\Delta_{z}\right)=0 \;\;{\rm and}\;\;\Tr_{1}^{n}\left(\frac{\lambda^{2^k}_z\overline{\nu}_z}{\xi^{2^k}_z}\right)=0,\end{eqnarray}
where
\begin{eqnarray*} \label{4:dz} \Delta_{z}=\frac{\nu_z+\overline{\nu}_z}{\xi^{2^k}_z}. \end{eqnarray*}
According to \eqref{boomerang-nu}, it can be readily verified that
\begin{eqnarray*}
&&(\overline{\nu}_z+ \nu_z)\left[(\theta^2)^{2^k}(1+\theta+\theta^{2^k})^{2}\left[
(\theta+1)(z^{2}+\overline{z}^{2})+\theta^2z\overline{z}\right]\right]
\\&=& \left(\frac{c_{2}z+c_{4}\overline{z}}{\overline{z}^{2^k}}+
\frac{c_{1}\overline{z}+c_{3}z}{z^{2^k}}\right)b+ \left(\frac{c_{2}\overline{z}+c_{4}z}{z^{2^k}}+\frac{c_{1}z+c_{3}
\overline{z}}{\overline{z}^{2^k}}\right)\overline{b}\\
 &=&\Tr_{m}^{n}\left(\left(\frac{c_{2}\overline{z}+c_{4}z}{z^{2^k}}+\frac{c_{1}z+
 c_{3}\overline{z}}{\overline{z}^{2^k}}\right)\overline{b}\right),
\end{eqnarray*}
which leads to
\begin{eqnarray} 
\Tr_{1}^{m}(\Delta_{z})
&=&\Tr_{1}^{m}\left(\frac{(\theta^2)^{2^k}(\nu_z+\overline{\nu}_z)}{((\theta+1)
(\overline{\gamma}_z+\gamma_z)+\theta^2)^{2^k}}\right) \nonumber\\
&=&\Tr_{1}^{m}\left(\frac{(z\overline{z})^{2^k}(\theta^2)^{2^k}(\nu_z+\overline{\nu}_z)}
{((\theta+1)(z^{2}+\overline{z}^{2})+\theta^2z\overline{z})^{2^k}}\right)  \nonumber \\
&=&\Tr_{1}^{n}\left(\frac{\left(\overline{z}^{2^k}(c_{2}\overline{z}+c_{4}z)
+z^{2^k}(c_{1}z+c_{3}\overline{z})\right)\overline{b}}
{(1+\theta+\theta^{2^k})^{2}((\theta+1)(z^{2}+\overline{z}^{2})+\theta^2z\overline{z})^{2^k+1}}\right).  \nonumber
\end{eqnarray}
Denoting
\begin{eqnarray*} \label{4:ez}
E(z):=(\theta+1)\left(z^{2}+\overline{z}^{2}\right)+\theta^2z\overline{z},
\end{eqnarray*}
we have
\begin{eqnarray}\label{Delta}
\Tr_{1}^{m}(\Delta_{z})
&=&\Tr_{1}^{n}\left(\frac{F(z)\overline{b}}
{(1+\theta+\theta^{2^k})^{2}E(z)^{2^k+1}}\right).
\end{eqnarray}

\begin{lem} \label{lem-invariant}
For any $a \in \ftwon^*$, we have
\begin{enumerate}
\item [(1)] $\sum_{z \in Z_a}F(z)=0$; and
 \item [(2)] $E(z) \ne 0 $ and is an invariant for any $z\in Z_a$.
\end{enumerate}
Here $Z_a=\left\{a,\eta_a,a+\eta_a\right\}$ and $\eta_a$ is defined by \eqref{eta}.
\end{lem}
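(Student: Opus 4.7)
The plan is as follows. Part (1) comes essentially for free from the preparation: by Lemma \ref{3:gen-eq} together with the derivation preceding \eqref{eta}, the equation $H_a(z)=0$ has $\{0,a,\eta_a,a+\eta_a\}$ as its four solutions. I would then plug $z=\eta_a$ into the definition \eqref{3:ha} of $H_a$ to obtain $F(\eta_a+a)+F(\eta_a)+F(a)=0$, which is exactly $\sum_{z\in Z_a}F(z)=0$.

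For Part (2), I would prove nonvanishing and invariance separately. The nonvanishing is immediate: rewriting gives
\[ E(a)=a\overline{a}\cdot\bigl[(\theta+1)(\gamma+\overline{\gamma})+\theta^2\bigr],\quad \gamma:=\overline{a}/a,\]
and both factors are nonzero (the second by exactly the argument used in the proof of Lemma \ref{lem-v1}). For the invariance, I would write $\eta_a=u\overline{a}+va$ with $u=(1+\theta)/\theta^2\in\ftwom$ and $v=1/\theta^2+\omega$, so that $\overline{v}=v+1$. A short direct computation, relying on the small identity $u^2+v\overline{v}=1$ (which is where $\omega^2=\omega+1$ enters), gives
\[ \eta_a+\overline{\eta}_a=\bigl(\tfrac{1}{\theta}+\omega\bigr)s+\overline{a},\qquad \eta_a\overline{\eta}_a=a\overline{a}+u\bigl(vs^2+\overline{a}^2\bigr),\]
where $s=a+\overline{a}\in\ftwom$. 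Substituting into $E(\eta_a)=(\theta+1)(\eta_a+\overline{\eta}_a)^2+\theta^2\,\eta_a\overline{\eta}_a$ and simplifying via $\theta^2u=\theta+1$, the $\omega$-dependent and $\overline{a}^2$-dependent contributions cancel in pairs, leaving $E(\eta_a)=(\theta+1)s^2+\theta^2a\overline{a}=E(a)$. For the third element, the bilinear polarization of the quadratic form $E$ gives
\[ E(a+\eta_a)=E(a)+E(\eta_a)+\theta^2\bigl(a\overline{\eta}_a+\overline{a}\eta_a\bigr),\]
and the direct calculation $a\overline{\eta}_a+\overline{a}\eta_a=us^2+a\overline{a}$, combined again with $\theta^2u=\theta+1$, yields $E(a+\eta_a)=(\theta+1)s^2+\theta^2a\overline{a}=E(a)$, completing the verification that $E$ is constant on $Z_a$.

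The main obstacle is the algebraic bookkeeping in evaluating $E(\eta_a)$: tracking how the $\omega$-contributions produced by squaring elements of $\ftwon\setminus\ftwom$ interact with the coefficients $\theta+1$ and $\theta^2$ so as to cancel exactly. This cancellation is governed by the two small identities $u^2+v\overline{v}=1$ and $\theta^2u=\theta+1$, and everything else is routine.
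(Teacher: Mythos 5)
Your proposal is correct and follows essentially the same route as the paper's own proof: part (1) via $H_a(\eta_a)=0$, nonvanishing via the factorization $E(a)=a\overline{a}\left[(\theta+1)(\gamma+\overline{\gamma})+\theta^2\right]$ and the argument of Lemma \ref{lem-v1}, and invariance by directly computing $\eta_a+\overline{\eta}_a$ and $\eta_a\overline{\eta}_a$ and using the polarization identity $E(x+y)=E(x)+E(y)+\theta^2(x\overline{y}+\overline{x}y)$. Your packaging of the algebra through the identities $u^2+v\overline{v}=1$ and $\theta^2u=\theta+1$ is only a mild streamlining of the same calculation.
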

\begin{proof}
See Appendix B.
\end{proof}
Lemma \ref{lem-invariant} indicates that $\sum_{z\in Z_a}\Tr_{1}^{m}(\Delta_{z})=0$ which implies that $\Tr_{1}^{m}(\Delta_{z})=0$ for exactly one $z\in Z_a$ or for all $z\in Z_a$. If $\Tr_{1}^{m}(\Delta_{z})=0$ for exactly one $z\in Z_a$, then $S_{F}(a,b)\leq 4$ by Lemma \ref{lemma-core}.

Now let us assume that $\Tr_{1}^{m}(\Delta_{z})=0$ for all $z\in Z_a$. For simplicity, define
\[\phi(z):=\frac{(1+\theta)\overline{z}}{\theta^2}+\left(\frac{1}{\theta^2}+\omega \right)z.\]
Using values $\nu_z$, $\xi_z$ and $\lambda_z$ given in \eqref{4:vz}, \eqref{4:xi} and \eqref{4:lam} respectively and noting that $\lambda_z+\overline{\lambda}_z=\xi_z$, we obtain
\begin{eqnarray} \label{V}
 \Tr_{1}^{n}\left(\frac{\lambda_z^{2^k}\overline{\nu}_z}{\xi_z^{2^k}}\right)
&=&
\Tr_{1}^{n}\left(\frac{\left(\overline{\phi(z)}^{2^k}(c_{2}\overline{z}+c_{4}z)+\phi(z)^{2^k}(c_{1}z+c_{3}\overline{z})\right)\overline{b}}
{(1+\theta+\theta^{2^k})^{2}\left((\theta+1)(z^{2}+\overline{z}^{2})+\theta^2z\overline{z}\right)^{2^k+1}}\right)+\Tr_1^n\left(\frac{\lambda_z}{\xi_z}\right) \nonumber \\
&=&
\Tr_{1}^{n}\left(\frac{H(z)\overline{b}}{(1+\theta+\theta^{2^k})^{2}E(z)^{2^k+1}}\right)  +\Tr_1^m\left(\frac{\lambda_z+\overline{\lambda}_z}{\xi_z}\right)  \nonumber \\
&=& \Tr_{1}^{n}\left(\frac{H(z)\overline{b}}{(1+\theta+\theta^{2^k})^{2}E(z)^{2^k+1}}\right)+1,
\end{eqnarray}
where $E(z)$ is defined as in Lemma \ref{lem-invariant} and $$H(z):=\overline{\phi(z)}^{2^k}(c_{2}\overline{z}+c_{4}z)+\phi(z)^{2^k}(c_{1}z+c_{3}\overline{z}).$$

\begin{lem} \label{lem-Hz}
Let $H(z)$ be defined as above and $\eta_a$ be defined by \eqref{eta}, then we have
$$H(a)=F(a+\eta_a),\;H(\eta_a)=F(a),\; H(a+\eta_a)=F(\eta_a).$$
\end{lem}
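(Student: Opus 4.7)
Comparing the definition of $\eta_a$ in \eqref{eta} with the definition of $\phi(z):=\frac{(1+\theta)\overline{z}}{\theta^2}+\left(\frac{1}{\theta^2}+\omega\right)z$ stated just before the lemma, one reads off the identification $\eta_a = \phi(a)$. So the three claims to prove amount to
\[H(a) = F(a+\phi(a)),\qquad H(\phi(a)) = F(a),\qquad H(a+\phi(a)) = F(\phi(a)).\]

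My first step would be to establish the structural identity
\[\phi(\phi(z)) = \phi(z) + z \qquad \text{for every } z \in \ftwon.\]
This is a short direct calculation once one uses $\overline{\omega} = \omega+1$, $\omega^2 = \omega+1$, and the fact that $\theta \in \ftwom$. Since $\phi$ is $\ftwom$-linear, this implies that the map $\psi(z):=\phi(z)+z$ cyclically permutes the three-element set $Z_a$:
\[\psi(a) = a+\phi(a),\qquad \psi(\phi(a)) = a,\qquad \psi(a+\phi(a)) = \phi(a).\]

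The heart of the proof is then to establish the polynomial identity
\[H(z) = F(\psi(z)) \qquad \text{for all } z \in \ftwon.\]
Setting $A = 1+\tfrac{1}{\theta^2}+\omega$ and $B = \tfrac{1+\theta}{\theta^2}$, one has $\phi(z) = (A+1)z + B\overline{z}$, $\overline{\phi(z)} = A\overline{z} + Bz$ and $\psi(z) = Az + B\overline{z}$. Expanding $F(\psi(z))$ via $F(x) = x^{2^k}(c_1 x+c_3\overline{x}) + \overline{x}^{2^k}(c_2\overline{x}+c_4 x)$ and expanding $H(z)$ directly from its definition, both sides become $\ftwon$-linear combinations of the four monomials $z^{2^k+1},\ \overline{z}^{2^k+1},\ z^{2^k}\overline{z},\ z\overline{z}^{2^k}$. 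Equating the four coefficient pairs yields four algebraic identities in $A,B,c_1,c_2,c_3,c_4$, each of which reduces to a polynomial identity in $\theta$ that can be verified by substituting the values from \eqref{ci-theta}, using $\omega^{2^k} = \omega+1$ (valid since $k$ is odd).

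Granting the two ingredients above, the three claims follow at once from $\psi$'s cyclic action:
\[H(a) = F(\psi(a)) = F(a+\eta_a),\ \ H(\eta_a) = F(\psi(\phi(a))) = F(a),\ \ H(a+\eta_a) = F(\psi(a+\phi(a))) = F(\eta_a).\]
The main obstacle is the verification of the four coefficient identities in the middle step: although mechanical, the explicit expressions for $c_1,c_2,c_3,c_4$ have degree $2(2^k+1)$ in $\theta$, so the bookkeeping is substantial. Exploiting the symmetries $\overline{A} = A+1$, $B = \overline{B}$ and pairing conjugate terms should reduce the four identities to two, and should streamline the remaining algebra considerably.
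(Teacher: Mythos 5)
Your proposal is correct, and its computational core is the same as the paper's: everything reduces to expanding two quadratic expressions in the monomial basis $z^{2^k+1}$, $\overline{z}^{2^k+1}$, $z^{2^k}\overline{z}$, $z\overline{z}^{2^k}$ and verifying four coefficient identities in $\theta$ and $\omega$ from \eqref{ci-theta}; you leave that verification implicit, and it is the real content of the lemma, but it is no harder than the paper's check that $s_i+t_i=c_i$. The organization, however, is genuinely different and buys you something. The paper proves the single evaluation $H(a)+F(\eta_a)=F(a)$ coefficientwise, converts it to $H(a)=F(a+\eta_a)$ by invoking Lemma \ref{lem-invariant}(1), and then asserts the other two identities follow ``in a similar manner,'' i.e., by two further rounds of the same expansion. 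You instead prove the one polynomial identity $H(z)=F(\psi(z))$ with $\psi=\phi+\mathrm{id}$ and derive all three claims at once from the fact that $\psi$ permutes $Z_a$ cyclically; that fact rests on $\phi(\phi(z))=\phi(z)+z$, which is indeed a two-line check (writing $\phi(z)=B\overline{z}+Cz$ with $B=(1+\theta)/\theta^2$, $C=\theta^{-2}+\omega$, it reduces to $B^2=C^2+C+1$, which holds because $\omega^2+\omega+1=0$). Your route also dispenses with Lemma \ref{lem-invariant}(1) entirely: the paper needs it to rewrite $F(a)+F(\eta_a)$ as $F(a+\eta_a)$, whereas your target identity absorbs the vanishing of the cross term $H_z(\phi(z))$ into the coefficient match. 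The symmetries $\overline{A}=A+1$, $\overline{B}=B$ you note do pair up the four identities into two, so the bookkeeping is, if anything, lighter than in the paper.
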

\begin{proof}
See Appendix C.
\end{proof}

Combining Lemma \ref{lem-invariant}, Lemma \ref{lem-Hz}, \eqref{Delta} and \eqref{V}, we see that if
$\Tr_{1}^{m}(\Delta_{z})=0$ for all $z\in Z_a$, then
\begin{eqnarray*}
 \Tr_{1}^{n}\left(\frac{\lambda_z^{2^k}\overline{\nu}_z}{\xi_z^{2^k}}\right)
= \Tr_{1}^{n}\left(\frac{H(z)\overline{b}}{(1+\theta+\theta^{2^k})^{2}E(a)^{2^k+1}}\right)+1=1, \quad \forall\; z \in Z_a.
\end{eqnarray*}
This means that Equation \eqref{4:4roots} never holds for $z \in Z_a$, that is, $S_{F}(a,b)=0$.  Combining these two cases we conclude that $S_{F}(a,b)\leq 4$ for any $a,b \in \ftwon^*$. Hence $\beta(F) \le 4$. This completes the proof of Theorem \ref{thm2} and equivalently Theorem \ref{thm1}.

\section*{Acknowledgements}

 The authors would like to thank Dr. Chunming Tang for helpful discussions. This work was supported by the National Natural Science Foundation of China (Nos. 61702166, 61761166010) and by the Research Grants Council (RGC) of Hong Kong (Nos. N\_HKUST169/17).

\section*{Appendix A: Proof of Lemma \ref{lemma-core}}

 \setcounter{equation}{0}
\renewcommand\theequation{A.\arabic{equation}}

\begin{proof}

Let \begin{equation*} \label{2:equation0} z=x+\overline{x}.\end{equation*}
Then the equation $L_{\mu,\nu}(x)=0$ becomes
\begin{equation}
\label{2:equation4} x^{2^k}+x+\mu z+\nu=0.
\end{equation}
Taking $2^m$-th power on both sides of \eqref{2:equation4} and adding them together gives
\begin{equation} \label{2:equation5}
z^{2^k}+(1+\mu+\overline{\mu})z+\nu+\overline{\nu}=0.
\end{equation}
Taking $2^k$-th power consecutively on both sides of \eqref{2:equation4}, one can also obtain
\begin{equation*}
x+x^{2^{km}}=x+\overline{x}=\sum_{i=0}^{m-1}(\mu z+\nu)^{2^{ki}}.
\end{equation*}
Hence solving $L_{\mu,\nu}(x)=0$ for $x \in \ftwon$ is equivalent to solving the system of equations (\ref{2:equation4}), (\ref{2:equation5}) and
\begin{equation}\label{2:equation6}
\sum_{i=0}^{m-1}(\mu z+\nu)^{2^{ki}}+z=0
\end{equation}
for $x \in \ftwon$ and $z \in \ftwom$. Note that $\sum_{i=0}^{m-1}(\mu z+\nu)^{2^{ki}}+z\in \mathbb{F}_2$.

Without checking the solvability of \eqref{2:equation6}, since $\gcd(n,k)=1$, for any $\mu$ and $\nu$, Equation \eqref{2:equation5} has at most two solutions for $z \in \ftwom$, and for each such $z$, Equation \eqref{2:equation4} has at most two solutions for $x \in \ftwon$, hence the equation $L_{\mu,\nu}(x)=0$ has at most 4 solutions. Also observe that whenever $z \in \ftwom$ is a solution to Equation \eqref{2:equation5} that satisfies Equation \eqref{2:equation6}, one always has
\[\Tr_1^n\left(\mu z+\nu\right)=\Tr_1^m\left(\mu z+\nu+\overline{\mu z+\nu}\right)=z+\overline{z}=0,\]
hence for such $z$, by Lemma \ref{lem0}, Equation \eqref{2:equation4} is always solvable with two solutions $x \in \ftwon$. We conclude that the number of solutions of $L_{\mu,\nu}(x)=0$ equals two times the number of $z \in \ftwon$ satisfying \eqref{2:equation5} and \eqref{2:equation6}.

Now we study in more details the solvability of \eqref{2:equation5} and \eqref{2:equation6}.

\noindent {\bf Case 1}: $1+\mu+\overline{\mu}=0$.

In this case, \eqref{2:equation5} has a unique solution $z$ such that $z^{2^k}=\nu+\overline{\nu}$, and \eqref{2:equation6} is equivalent to
\begin{equation*}
\sum_{i=0}^{m-1}\left(\mu^{2^k} z^{2^k}+\nu^{2^k}\right)^{2^{ki}}=z^{2^k},
\end{equation*}
and this proves (i) of Lemma \ref{lemma-core}.

\noindent {\bf Case 2}: $1+\mu+\overline{\mu}\ne 0$.

Let $\xi$, $\Delta$ be defined by \eqref{notation} and $z=\xi\rho$, then \eqref{2:equation5} becomes
\begin{equation} \label{2:rho}
\rho^{2^k}+\rho=\Delta
\end{equation}
which has solutions for $\rho \in \ftwom$ if and only if $\Tr_1^m(\Delta)=0$.

We now assume that $\Tr_1^m(\Delta)=0$. The two solutions $z_1,z_2 \in \ftwom$ to Equation \eqref{2:equation5} satisfy the relation
\[z_1+z_2=\xi.\]
Using $\lambda^{2^k}+\lambda=\mu \xi$, we have
\begin{eqnarray*} \sum_{j=1}^2 \left(\sum_{i=0}^{m-1} \left(\mu z_{j}+\nu\right)^{2^{ki}}+z_{j}\right) = \sum_{i=0}^{m-1}\left(\mu \xi \right)^{2^{ki}}+\xi
=\lambda+\overline{\lambda}+\xi\in \mathbb{F}_2. \end{eqnarray*}
It is easy to see that if $\lambda+\overline{\lambda}=\xi+1$, then among $z_1$ and $z_2$, exactly one element satisfies Equation \eqref{2:equation6}, hence the equation $L_{\mu,\nu}(x)=0$ has two solutions. This proves (ii) of Lemma \ref{lemma-core}.

Finally, let us assume $\overline{\lambda}+\lambda=\xi$. In this case, either both $z_1$ and $z_2$  satisfy \eqref{2:equation6} or neither satisfy \eqref{2:equation6}, hence the equation $L_{\mu,\nu}(x)=0$ has either $4$ or $0$ solution.

Let $z=\xi\rho$ be a solution to Equation \eqref{2:equation5} where $\rho \in \ftwom$ satisfies (\ref{2:rho}). We will compute directly the left hand side of Equation \eqref{2:equation6} for $z$. For this purpose denote
\[h(z):= \sum_{i=0}^{m-1} \left(\mu \xi \rho+\nu\right)^{2^{ki}}+\xi \rho.\]
Using (\ref{2:rho}) and the relation $\sum_{i=0}^{m-1}\left(\mu \xi \right)^{2^{ki}}=\lambda+\overline{\lambda}=\xi$ we can obtain
\begin{eqnarray}\label{sumxirho}
\sum_{i=0}^{m-1}(\mu\xi\rho)^{2^{ki}}
&=&\sum_{i=1}^{m-1}(\mu\xi)^{2^{ki}}\rho^{2^{ki}}+\mu\xi\rho
=\sum_{i=1}^{m-1}(\mu\xi)^{2^{ki}}\left(\rho+\sum_{j=0}^{i-1}\Delta^{2^{kj}} \right)+\mu\xi\rho \nonumber\\
&=&\rho\sum_{i=0}^{m-1}(\mu\xi)^{2^{ki}}+\sum_{i=1}^{m-1}(\mu\xi)^{2^{ki}}\sum_{j=0}^{i-1}\Delta^{2^{kj}}\nonumber\\
&=&\rho\xi+\sum_{i=1}^{m-1}(\mu\xi)^{2^{ki}}\sum_{j=0}^{i-1}\Delta^{2^{kj}}.
\end{eqnarray}
As for the second term on the right side of (\ref{sumxirho}), using $\Tr_1^m(\Delta)=\sum_{i=0}^{m-1}\Delta^{2^{ki}}=0$, one can obtain
\begin{eqnarray} \label{lamDelta}
\sum_{i=1}^{m-1}\sum_{j=0}^{i-1}(\mu\xi)^{2^{ki}}\Delta^{2^{kj}}
&=&\sum_{j=0}^{m-2}\Delta^{2^{kj}}\sum_{i=j+1}^{m-1}(\mu\xi)^{2^{ki}}
=\sum_{j=0}^{m-2}\Delta^{2^{kj}}\sum_{i=j+1}^{m-1}(\lambda^{2^k}+\lambda)^{2^{ki}}\nonumber\\
&=&\sum_{j=0}^{m-2}\Delta^{2^{kj}}(\lambda^{2^{k(j+1)}}+\lambda^{2^{km}})\nonumber\\
&=&\sum_{j=0}^{m-2}(\lambda^{2^k}\Delta)^{2^{kj}}+\Delta^{2^{k(m-1)}}\lambda^{2^{km}}\nonumber\\
&=&\sum_{j=0}^{m-1}(\lambda^{2^k}\Delta)^{2^{kj}}.
\end{eqnarray}
Combining (\ref{sumxirho}) and (\ref{lamDelta}) we can easily find
\[h(z)=\sum_{i=0}^{m-1} \left(\lambda^{2^k}\Delta+\nu\right)^{2^{ki}}.\]
Finally, noting that
\[\lambda^{2^k}\Delta+\nu=\frac{\lambda^{2^k}}{\xi^{2^k}}(\overline{\nu}+\nu)+\nu
=\frac{\lambda^{2^k}\overline{\nu}+\overline{\lambda}^{2^k} \nu}{\xi^{2^k}}, \]
we conclude that
\[h(z)= \Tr_1^n \left(\frac{\lambda^{2^k}\overline{\nu}}{\xi^{2^k}}\right).\]
Hence a solution $z$ of \eqref{2:equation5} satisfies \eqref{2:equation6} if and only if
\begin{equation*}
\Tr_{1}^{n} \left(\frac{\lambda^{2^k}\overline{\nu} }{\xi^{2^k}} \right)=0.
\end{equation*}
Thus the equation $L_{\mu,\nu}(x)=0$ has $4$ solutions. This completes the proof of (2) of Lemma \ref{lemma-core}.

Finally, let us assume that $\nu=0, 1+\mu+\overline{\mu} \ne 0$ and $\lambda+\overline{\lambda}=\xi$. Then \eqref{2:equation5} has two solutions $z_1=0, z_2=\xi$ which both satisfy \eqref{2:equation6}. Returning to \eqref{2:equation4}, the corresponding four roots of $L_{\mu,\nu}(x)=0$ are given by $0,1,\lambda, \lambda+1$. Now Lemma \ref{lemma-core} is proved.
\end{proof}

\section*{Appendix B: Proof of Lemma \ref{lem-invariant}}

 \setcounter{equation}{0}
\renewcommand\theequation{B.\arabic{equation}}

\begin{proof}
It is known that $H_a(z)=0$ for any $z \in Z_a$, that is
\[F(z)+F(z+a)+F(a)=0 \qquad \forall \;z \in Z_a. \]
Taking $z=\eta_a$, the first assertion is proved.

We next show $E(z)=(\theta+1)(z^{2}+\overline{z}^{2})+\theta^2z\overline{z}$ is
invariant for $z\in Z_a$.
Firstly, by \eqref{eta} one gets
\[ \eta_a\overline{a}+\overline{\eta}_a a=\frac{(1+\theta)(\overline{a}^2+a^2)}{\theta^2}+a\overline{a}.\] From this one can easily deduce that $E(a+\eta_a)=E(\eta_a)$. Secondly, again by \eqref{eta}, one obtains
\begin{eqnarray*}
\eta_a+\overline{\eta}_a&=&\frac{a+\overline{a}}{\theta}+\omega a+\omega^2\overline{a},\\
\eta_a \cdot \overline{\eta}_a&=&a\overline{a}+\frac{(\theta+1)(\omega \theta^2+1)a^2}{\theta^4}+\frac{(\theta+1)(\omega^2\theta^2+1)\overline{a}^2}{\theta^4},
\end{eqnarray*}
where $\omega \in\ftwon$ satisfying $\omega^2+\omega+1=0$. Then one can easily verify that
\begin{eqnarray*}
&&E(\eta_a)\\&=&(\theta+1)(\eta_a^{2}+\overline{\eta}_a^{2})+\theta^2\eta_a\overline{\eta}_a\\
  &=& (\theta+1)\left(\frac{a+\overline{a}}{\theta}+\omega a+\omega^2\overline{a}\right)^2+ \theta^2a^2\overline{a}^2
    +\frac{(\theta+1)(\omega \theta^2+1)a^2}{\theta^2}+\frac{(\theta+1)(\omega^2\theta^2+1)\overline{a}^2}{\theta^2} \\
   &=&  (\theta+1)(\overline{a}^2+a^2)+\theta^2a\overline{a}= E(a).
\end{eqnarray*}
Noting that
\[ E(a) =a \overline{a} \left((\theta+1) \left(\gamma+\overline{\gamma}\right) +\theta^2\right),\]
where $\gamma=\overline{a}/a$, it is now clear that $E(a) \ne 0$ from the proof of Lemma \ref{lem-v1}. This completes the proof of Lemma \ref{lem-invariant}.
\end{proof}

\section*{Appendix C: Proof of Lemma \ref{lem-Hz}}

 \setcounter{equation}{0}
\renewcommand\theequation{C.\arabic{equation}}

\begin{proof}

Let $H(z)$ be defined as in Lemma \ref{lem-Hz}, i.e.,
\begin{eqnarray*}
H(z)=\phi(z)^{2^k}(c_{1}z+c_{3}\overline{z})+\overline{\phi(z)}^{2^k}(c_{2}\overline{z}+c_{4}z),
\end{eqnarray*}
and $F(x)$ be defined as \eqref{F} which can also be written as
\begin{eqnarray*}
F(x)=x^{2^k}(c_1x+c_3\overline{x})+\overline{x}^{2^k}(c_2\overline{x}+c_4 x),
\end{eqnarray*}
where $c_i$'s are given by \eqref{ci-theta}. For simplicity, define
\begin{eqnarray*}
\psi(x,y)=\frac{(\theta+1)x}{\theta^2}+\left(\frac{1}{\theta^2}+\omega\right)y.
\end{eqnarray*}
Note that $\phi(z)=\frac{(1+\theta)\overline{z}}{\theta^2}+\left(\frac{1}{\theta^2}+\omega \right)z$ and $\eta_a=\phi(a)$ according to \eqref{eta}. Then by a direct calculation one has
\begin{eqnarray*}
F(\eta_a)&=&\eta_a^{2^k}(c_1\eta_a+c_3\overline{\eta}_a)+\overline{\eta}_a^{2^k}(c_2\overline{\eta}_a+c_4 \eta_a) \\
&=&\phi(a)^{2^k}\left(c_1\phi(a)+c_3\overline{\phi(a)}\right)+\overline{\phi(a)}^{2^k}\left(c_2\overline{\phi(a)}+c_4 \phi(a)\right) \\
&=& s_1a^{2^k+1}+s_2\overline{a}a^{2^k}+s_3\overline{a}^{2^k}a+s_4\overline{a}^{2^k+1},
\end{eqnarray*}
where
\begin{eqnarray*}
 s_{1}
&=&\left(\frac{1}{\theta^2}+\omega \right)^{2^k}\psi(c_3,c_1)+\left(\frac{\theta+1}{\theta^2}\right)^{2^k}\psi(c_2,c_4), \\
s_2
&=&\left(\frac{1}{\theta^2}+\omega\right)^{2^k}\overline{\psi(c_1,c_3)}+\left(\frac{\theta+1}{\theta^2}\right)^{2^k}\overline{\psi(c_4,c_2)}, \\
 s_{3}
&=&\left(\frac{\theta+1}{\theta^2}\right)^{2^k}\psi(c_3,c_1)+\left(\frac{1}{\theta^2}+\omega^2\right)^{2^k}\psi(c_2,c_4),  \\
s_4
&=&\left(\frac{\theta+1}{\theta^2}\right)^{2^k}\overline{\psi(c_1,c_3)}+\left(\frac{1}{\theta^2}+\omega^2\right)^{2^k}\overline{\psi(c_4,c_2)}.
\end{eqnarray*}
On the other hand, one gets
\begin{eqnarray*}
H(a)&=&\phi(a)^{2^k}(c_{1}a+c_{3}\overline{a})+\overline{\phi(a)}^{2^k}(c_{2}\overline{a}+c_{4}a) \\
&=& t_1a^{2^k+1}+t_2\overline{a}a^{2^k}+t_3\overline{a}^{2^k}a+t_4\overline{a}^{2^k+1},
\end{eqnarray*}
where
\begin{eqnarray*}
\begin{array}{llllll}
 t_{1}&=&\left(\frac{1}{\theta^2}+\omega\right)^{2^k}c_1+\left(\frac{\theta+1}{\theta^2}\right)^{2^k}c_4, &t_2&=&\left(\frac{1}{\theta^2}+\omega\right)^{2^k}c_3+\left(\frac{\theta+1}{\theta^2}\right)^{2^k}c_2, \\
 t_{3}&=&\left(\frac{\theta+1}{\theta^2}\right)^{2^k}c_1+\left(\frac{1}{\theta^2}+\omega^2\right)^{2^k}c_4,  &t_4&=&\left(\frac{\theta+1}{\theta^2}\right)^{2^k}c_3+\left(\frac{1}{\theta^2}+\omega^2\right)^{2^k}c_2.
\end{array}
\end{eqnarray*}
A straightforward calculation gives
\begin{eqnarray*}
s_1+t_1&=&\left(\frac{1}{\theta^2}+\omega\right)^{2^k}\left(\psi(c_3,c_1)+c_1\right)+\left(\frac{\theta+1}{\theta^2}\right)^{2^k}\left(\psi(c_2,c_4)+c_4\right)\\
&=& \left(\frac{1}{\theta^2}+\omega\right)^{2^k}\overline{\psi(c_3,c_1)}+\left(\frac{\theta+1}{\theta^2}\right)^{2^k}\overline{\psi(c_2,c_4)}.
\end{eqnarray*}
Using the identities $\omega^2+\omega+1=0$, $\omega^{2^k}=\omega^2$ and the facts $\theta, c_i\in\ftwom$ and denoting
\begin{eqnarray*}
 A&=&(\theta+1)c_3+\left(1+\theta^{2^{k+1}+2}\right)c_1+(\theta+1)^{2^k+1}c_2+(\theta+1)^{2^k}c_4,\\
 B&=&\theta^{2^{k+1}}(\theta+1)c_3+\left(\theta^2+\theta^{2^{k+1}}+\theta^{2^{k+1}+2}\right)c_1+(\theta+1)^{2^k}\theta^2c_4,
\end{eqnarray*}
with some straightforward computation, we can verify that
\begin{eqnarray*}
s_1+t_1&=& \theta^{-2^{k+1}-2}A+B\omega^2 = c_1.
\end{eqnarray*}
Using the values of $c_i$'s from \eqref{ci-theta}, one can similarly verify that
\[ s_i+t_i=c_i, \qquad \forall\; i=2,3,4.\]
This shows that $F(\eta_a)+H(a)=F(a)$, that is, $H(a)=F(a)+F(\eta_a)=F(a+\eta_a)$ by Lemma \ref{lem-invariant}. The other two identities on $H$ and $F$ can be proved in a similar manner. This completes the proof of Lemma \ref{lem-Hz}.

\end{proof}

\end{document}